\documentclass[11pt]{amsart}
\usepackage{amsmath,amssymb,amsthm,mathtools}
\usepackage[margin=1in]{geometry}
\usepackage{hyperref}
\usepackage{booktabs}
\usepackage[table]{xcolor}
\usepackage{caption}
\usepackage{makecell}
\usepackage{listings}
\usepackage{microtype}

\usepackage{subcaption}

\newtheorem{theorem}{Theorem}[section]
\newtheorem{lemma}[theorem]{Lemma}
\newtheorem{proposition}[theorem]{Proposition}
\newtheorem{corollary}[theorem]{Corollary}
\newtheorem{definition}[theorem]{Definition}
\newtheorem{remark}[theorem]{Remark}
\newtheorem{example}[theorem]{Example}

\title[Uniform reduction \& order-matching criterion for
Birman--Schwinger lattice asymptotics]
{A uniform elliptic reduction, an order-matching criterion, and
precision benchmarks for the strong-coupling Birman--Schwinger
analysis of the lattice three-boson trimer}

\author{M.~V.~Dolgopolov}
\address{Department of
Higher Mathematics, Samara State Technical
University, Samara, Russia}
\email{mikhaildolgopolov68@gmail.com}
\subjclass[2020]{47A75 (primary); 47A10, 81Q10, 81Q35, 82B26 (secondary)}
\keywords{Birman--Schwinger principle, lattice Schr\"odinger operator,
strong-coupling asymptotics, Fredholm determinant, elliptic reduction,
time-reversal-invariant momenta, variational bounds, branch selection}

\date{}

\begin{document}
\maketitle

\begin{abstract}
We study the strong-coupling Birman--Schwinger analysis of the
three-boson Schr\"odinger operator on $\mathbb Z^2$ at total
quasimomentum $\mathbf K=\boldsymbol\pi$.
We establish three results, of independent
methodological interest beyond this specific model. First, we give an
exact closed-form benchmark for the fiber Fredholm determinant
$\Delta_\mu(\mathbf p,\mathbf K,z)$ at a distinguished ``flat''
momentum $\mathbf p^*(\mathbf K)=\mathbf K-(\pi,\pi)\ (\mathrm{mod}\
2\pi)$, valid for \emph{every} $\mathbf K\in\mathbb T^2$ — not only
$\mathbf K=0,\boldsymbol\pi$ — obtained from a $\mathbf K$-uniform
reduction of the defining two-dimensional lattice integral to a single
elliptic-type integral. Second, we formulate and prove a general
\emph{order-matching criterion} that decides, for a broad class of
strong-coupling Birman--Schwinger eigenvalue problems, whether a
leading-order (relative accuracy $O(\mu^{-1})$) asymptotic formula for
the governing Fredholm determinant is sufficient to determine the
$O(1)$ additive constant in a strong-coupling energy asymptotic, or
whether the next order is required.

Third, we apply the criterion to the formal branch $z=-2\mu+d$ at
$\mathbf K=\boldsymbol\pi$ and obtain the algebraic crossing
\[
z_{\mathrm{formal}}^{\boldsymbol\pi,s}(\mu)=-2\mu+6+\frac8\mu+O(\mu^{-2}),
\]
which arises from the finite-rank principal part. However, we
demonstrate that this formal crossing \emph{does not} correspond to a
true eigenvalue of the full Hamiltonian. The actual ground state
satisfies the rigorous variational bounds
\[
-3\mu \le z_1^{\boldsymbol\pi,s}(\mu) \le -3\mu+6,
\]
so $z_1^{\boldsymbol\pi,s}(\mu)=-3\mu+O(1)$ — the same leading branch
as at $\mathbf K=0$ — and direct finite-volume
diagonalisation (Table~\ref{tab:hp_ground}) confirms the refined
asymptotic
$z_1^{\boldsymbol\pi,s}(\mu)=-3\mu+6+O(\mu^{-1})$, with spectral gap
$z_\mu(\boldsymbol\pi)-z_1^{\boldsymbol\pi,s}(\mu)=2\mu-2+O(\mu^{-1})$.
The formal branch is retained as an instructive example of the
order-matching degeneracy (case~(ii) of the criterion) and illustrates
the necessity of a~branch-selection step before applying the criterion.

We further give an independent
confirmation, bypassing the corresponding leading-order asymptotic
lemma entirely, that the $\mathbf K=0$ ground-state constant
$C\approx3.96458$ is correctly derived and
requires no analogous refinement. We close with a comparison, across
several recent works on lattice few-body Birman--Schwinger operators
and on the parity classification of topological band structures, of
the orders of asymptotic precision at which such additive constants
are controlled, and of the methods used to control them.
\end{abstract}

\section{Introduction}\label{sec:intro}

The strong-coupling asymptotic analysis of few-body Schr\"odinger
operators on $\mathbb Z^d$ via the Birman--Schwinger principle
proceeds, in essentially every instance of this method, through the
same three-step template: (i) reduce the eigenvalue problem for the
full Hamiltonian $H_\mu$ to the spectral problem for a compact
integral operator $A_\mu(z)$ (the Birman--Schwinger operator); (ii)
isolate, in~the regime $\mu\to\infty$, a finite-rank \emph{principal
part} $A_\mu^p(z)$ that captures the leading behaviour of $A_\mu(z)$,
with the remainder controlled in operator norm,
$\|A_\mu(z)-A_\mu^p(z)\|=O(\mu^{-1})$; and (iii) solve the finite-rank
eigenvalue equation $\lambda(A_\mu^p(z))=1$ order by order in
$1/\mu$ to extract the strong-coupling energy asymptotic.

Step (ii) is where a subtle and, we will argue, generic pitfall
arises. The principal part $A_\mu^p(z)$ is built from a leading-order
asymptotic formula for the Fredholm determinant $\Delta_\mu(\mathbf
p,\mathbf K,z)$ of the underlying two-particle (or, more generally,
sub-system) resolvent, valid to \emph{relative} accuracy $O(\mu^{-1})$.
Such a formula is, by construction, sufficient to fix the
\emph{leading order} of the sought energy asymptotic~— but it is not
automatically sufficient to fix the \emph{next}, $O(1)$ additive
constant. Whether it is or is not sufficient depends on a structural
feature of the eigenvalue equation that we make precise in
Section~\ref{sec:criterion}.

A second, logically prior, pitfall is the choice of the strong-coupling
branch $z_0(\mu)=-c\mu$ around which the principal-part expansion is
organised. A finite-rank principal part captures only a~finite-dimensional shadow of the full operator; a formal crossing of
its leading eigenvalue through $1$ is necessary, but not sufficient,
for the existence of a true eigenvalue of $H_\mu$ at that value of
$z$. The~order-matching criterion must therefore be preceded by a
branch-selection step, which we carry out for the $\mathbf
K=\boldsymbol\pi$ trimer in Section~\ref{sec:groundstate} by a
variational (minimax) argument.

The three-boson lattice Schr\"odinger operator on $\mathbb Z^2$ with
pairwise contact interactions is a natural strong-coupling testing
ground for both pitfalls: its Fredholm determinant at total
quasimomentum $\mathbf K=\boldsymbol\pi$ turns out to sit exactly at
the borderline described above. Working through this example in full
led us to isolate the underlying mechanism as a general phenomenon,
which we formulate independently of the specific model in
Section~\ref{sec:criterion}. Our purpose is
twofold. First, and primarily, we~present this general,
model-independent \emph{order-matching criterion}
(Section~\ref{sec:criterion}) that a practitioner can apply
\emph{before} committing to a strong-coupling computation of this
type, together with the branch-selection variational bound that must
precede it (Section~\ref{sec:groundstate}). We illustrate the
criterion on two structurally different instances arising in the same
model: the formal branch $z=-2\mu+d$ at $\mathbf K=\boldsymbol\pi$,
where the leading-order formula is insufficient (and which is not the
ground-state branch), and the second bound state at $\mathbf K=0$,
where it suffices.

Second, we record
the corresponding precision benchmarks
(Sections~\ref{sec:corrected}--\ref{sec:numerics}) at a level of
numerical and analytic detail — convergence-order studies, two
independent computational schemes, an exact closed-form check — that
goes beyond what a self-contained account of the main results of a
single model requires, and that we believe is independently useful as
a template for the companion series of models referenced in
Section~\ref{sec:related}.

As a by-product of the reduction developed in
Section~\ref{sec:reduction}, we observe that the exact solvability
underlying our closed-form benchmark is available at every
quasimomentum $\mathbf K\in\mathbb T^2$, not only at the two points
$\mathbf K=0,\boldsymbol\pi$;
we make
precise, however, that this extends only the \emph{quantitative}
machinery, not the \emph{qualitative} parity-based invariant-subspace
reduction, which is special to the four
time-reversal-invariant momenta of the Brillouin zone
(Proposition~\ref{prop:trim}) — a notion with a well-known and
much-cited counterpart in the parity classification of topological
band structures~\cite{FuKane2007}, discussed in
Remark~\ref{rem:fukane} and Section~\ref{sec:related}.

\paragraph{Correction to earlier version.}
An earlier version of this note stated, as the main
$\mathbf K=\boldsymbol\pi$ result, $z_1^{\boldsymbol\pi,s}(\mu)=-2\mu+6+8/\mu+O(\mu^{-2})$,
obtained by a rank-one principal-part computation in the branch
$z=-2\mu+d$. The algebra of that computation is correct as a
statement about the rank-one principal part; however, its interpretation
as the ground-state asymptotics is not. A variational bound
(Section~\ref{sec:groundstate}) gives $-3\mu\le z_1^{\boldsymbol\pi,s}(\mu)\le -3\mu+6<-2\mu+6$
for $\mu>0$, and direct finite-volume diagonalization of $H_\mu(\pi)$
(Section~\ref{sec:hp_numerics}) places the ground state at
$z_1^{\boldsymbol\pi,s}\approx -3\mu+6$, with no eigenvalue observed
near $-2\mu+6$ (which lies in the spectral gap). The present version
corrects the main result accordingly, and demotes the $z=-2\mu+d$
computation to an order-matching diagnostic (Section~\ref{sec:corrected}).

\section{Setup}\label{sec:setup}

We fix, for self-containedness, the minimal notation
needed below. Let $\mathbb T^2=(-\pi,\pi]^2$
and
\[
\varepsilon(\mathbf p)=2-\cos p_1-\cos p_2,\qquad \mathbf p=(p_1,p_2)\in\mathbb T^2.
\]

\begin{remark}[Units]\label{rem:units}
As is standard for tight-binding lattice Hamiltonians, energies
throughout are measured in units of the nearest-neighbour hopping
amplitude, set equal to $1$ (together with $\hbar=1$ and lattice
spacing $a=1$). In these units, $\mu$ (interaction strength), $z$
(spectral parameter), $\varepsilon(\mathbf p)$, $E_{\mathbf K}(\mathbf
p,\mathbf q)$, $D(\mathbf p,z)$, and all numerical constants appearing
in the asymptotic expansions below (e.g.\ $4,6,8,12,18$) are
dimensionless quantities expressed in this common unit; no term in any
formula below mixes distinct physical dimensions. Momenta $\mathbf p,
\mathbf q,\mathbf K\in\mathbb T^2$ are likewise dimensionless (angles),
consistently with $a=1$.
\end{remark}
For total quasimomentum $\mathbf K\in\mathbb T^2$, the fiber dispersion
of the three-boson problem is
\[
E_{\mathbf K}(\mathbf p,\mathbf q)=\varepsilon(\mathbf p)+\varepsilon(\mathbf q)+\varepsilon(\mathbf K-\mathbf p-\mathbf q),
\]
and the fiber Hamiltonian is $H_\mu(\mathbf
K)=E_{\mathbf K}-\mu(V_1+V_2+V_3)$ on $L_2^s((\mathbb T^2)^2)$, where
$V_1,V_2,V_3$ are the (orthogonal, rank-related) contact-interaction
projections.
The associated Birman--Schwinger
operator $A_\mu(\mathbf K,z)$ has kernel
\[
A_\mu(\mathbf p,\mathbf q;\mathbf K,z)=\frac{\mu}{2\pi^2}\,
\frac{1}{\sqrt{\Delta_\mu(\mathbf p,\mathbf K,z)}\,\big(E_{\mathbf K}(\mathbf p,\mathbf q)-z\big)\,\sqrt{\Delta_\mu(\mathbf q,\mathbf K,z)}},
\]
where the fiber Fredholm determinant is
\begin{equation}\label{eq:Delta_def}
\Delta_\mu(\mathbf p,\mathbf K,z)=1-\frac{\mu}{4\pi^2}\int_{\mathbb T^2}\frac{d\mathbf q}{E_{\mathbf K}(\mathbf p,\mathbf q)-z},
\end{equation}
and $z\mapsto N[1,A_\mu(\mathbf K,z)]$ counts eigenvalues of
$H_\mu(\mathbf K)$ below $z$ (Birman--Schwinger principle).

Here and below integrals over \(\mathbb{T}^2\) are taken with the normalized Haar measure \(\frac{1}{4\pi^2}\int_{\mathbb{T}^2} d\mathbf p\).

\section{A $\mathbf K$-uniform elliptic reduction and an exact benchmark}\label{sec:reduction}

\subsection{Reduction of the defining integral}

\begin{theorem}[Uniform elliptic reduction]\label{thm:uniform}
For every $\mathbf K,\mathbf p\in\mathbb T^2$ and $z$ below the
essential spectrum, the inner integral in~\eqref{eq:Delta_def} admits
the exact representation
\begin{equation}\label{eq:generalK}
\frac{\mu}{4\pi^2}\int_{\mathbb T^2}\frac{d\mathbf q}{E_{\mathbf K}(\mathbf p,\mathbf q)-z}
=\frac{\mu}{2\pi}\int_0^{2\pi}\frac{d\theta}
{\sqrt{\big(D(\mathbf p,z)-R_2(\mathbf K,\mathbf p)\cos\theta\big)^2-R_1(\mathbf K,\mathbf p)^2}},
\end{equation}
where
\begin{equation}\label{eq:generalK2}
D(\mathbf p,z)=\varepsilon(\mathbf p)+4-z\qquad(\text{independent of }\mathbf K),
\qquad
R_i(\mathbf K,\mathbf p)=\displaystyle 2\Big|\cos\tfrac{K_i-p_i}2\Big|,\ i=1,2.
\end{equation}
This reduction is the pairwise-interacting, fixed-total-quasimomentum
analogue of the classical elliptic-integral reduction of the
single-particle square-lattice Green's
function~\cite{KatsuraInawashiro1971,MoritaHoriguchi1971}.
\end{theorem}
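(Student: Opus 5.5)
The approach is to diagonalise the $\mathbf q$-dependence of $E_{\mathbf K}(\mathbf p,\mathbf q)$ coordinatewise by a sum-to-product identity. This turns the integrand into $1/(D-R_1\cos\varphi_1-R_2\cos\varphi_2)$. Then the $\varphi_1$-integral is done explicitly by the elementary formula $\frac1{2\pi}\int_0^{2\pi}\frac{d\varphi}{A-R\cos\varphi}=\frac1{\sqrt{A^2-R^2}}$, valid for $A>|R|$.

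First, I write
\[
E_{\mathbf K}(\mathbf p,\mathbf q)=\varepsilon(\mathbf p)+4-\sum_{i=1}^2\big[\cos q_i+\cos(K_i-p_i-q_i)\big].
\]
With $a=q_i$ and $b=K_i-p_i-q_i$, the identity $\cos a+\cos b=2\cos\frac{a+b}2\cos\frac{a-b}2$ gives
\[
\cos q_i+\cos(K_i-p_i-q_i)=2\cos\tfrac{K_i-p_i}2\,\cos\!\big(q_i-\tfrac{K_i-p_i}2\big).
\]
Hence $E_{\mathbf K}(\mathbf p,\mathbf q)-z=D(\mathbf p,z)-\sum_i 2\cos\frac{K_i-p_i}2\cos(q_i-\frac{K_i-p_i}2)$, with $D$ as in~\eqref{eq:generalK2}. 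This visibly does not depend on $\mathbf K$.

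Next, I substitute $\varphi_i=q_i-\frac{K_i-p_i}2$ and, where $\cos\frac{K_i-p_i}2<0$, further $\varphi_i\mapsto\varphi_i+\pi$. The integrand is $2\pi$-periodic in each $q_i$ and the normalised Haar measure is translation invariant, so these shifts leave the integral unchanged. The coefficients thereby become $R_i=2|\cos\frac{K_i-p_i}2|$. Taking the absolute value also removes the sign ambiguity in the half-angle coming from the choice of representative of $K_i-p_i$ mod $2\pi$; I will note this explicitly so that $R_i$ is seen to be well defined on $\mathbb T^2$. The left side of~\eqref{eq:generalK} thus equals
\[
\mu\,\frac1{2\pi}\int_0^{2\pi}d\theta\;\frac1{2\pi}\int_0^{2\pi}\frac{d\varphi_1}{(D-R_2\cos\theta)-R_1\cos\varphi_1},\qquad \theta=\varphi_2.
\]

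Finally, I apply the elementary integral with $A=D-R_2\cos\theta$ and $R=R_1$. It can be proved by residues on $|w|=1$, or quoted from standard tables. The one point needing care, and the only real obstacle, is verifying $A>R_1$ uniformly in $\theta$. Here I use the hypothesis that $z$ lies below the essential spectrum of $H_\mu(\mathbf K)$, which in particular lies below the bottom of the three-particle continuum, i.e.\ $z<\min_{\mathbf q}E_{\mathbf K}(\mathbf p,\mathbf q)$ for the relevant $\mathbf p$. By the representation above, this minimum equals $D-R_1-R_2+z$, so the hypothesis gives $D-R_1-R_2>0$. Hence $A\ge D-R_2>R_1\ge0$ for every $\theta$. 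This makes the square root real and positive and justifies the Fubini interchange, since the integrand is bounded and positive. Substituting the $\varphi_1$-integral then yields exactly~\eqref{eq:generalK}.
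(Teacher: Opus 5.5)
Your proof is correct and follows the same route as the paper's: you rewrite $E_{\mathbf K}(\mathbf p,\mathbf q)-z$ as $D(\mathbf p,z)-\sum_i R_i\cos(q_i-\varphi_i)$, shift $q_i$ using translation invariance, and integrate out $q_1$ with $\frac1{2\pi}\int_0^{2\pi}\frac{d\theta}{a-b\cos\theta}=(a^2-b^2)^{-1/2}$, which is valid because $D>R_1+R_2$. The only differences are cosmetic. You reach the amplitude--phase form by sum-to-product, with an extra $\pi$-shift to absorb the sign, where the paper uses the addition formula with $R_i=\sqrt{\alpha_i^2+\beta_i^2}$. You also spell out more explicitly than the paper why $D>R_1+R_2$ follows from $z<\min_{\mathbf q}E_{\mathbf K}(\mathbf p,\mathbf q)$.
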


\begin{proof}
Fix $\mathbf p,\mathbf K$ and regard $E_{\mathbf K}(\mathbf p,\mathbf
q)$ as a function of $\mathbf q$. Using
$\cos(K_i-p_i-q_i)=\cos(K_i-p_i)\cos q_i+\sin(K_i-p_i)\sin q_i$,
\[
E_{\mathbf K}(\mathbf p,\mathbf q)=\varepsilon(\mathbf p)+4-\sum_{i=1}^2
\Big[\big(1+\cos(K_i-p_i)\big)\cos q_i+\sin(K_i-p_i)\sin q_i\Big].
\]
Writing the coordinate-$i$ bracket as $R_i\cos(q_i-\varphi_i)$ with
$\alpha_i=1+\cos(K_i-p_i)$, $\beta_i=\sin(K_i-p_i)$,
$R_i=\sqrt{\alpha_i^2+\beta_i^2}=\sqrt{2+2\cos(K_i-p_i)}=2|\cos\tfrac{K_i-p_i}2|$
gives $E_{\mathbf K}(\mathbf p,\mathbf q)-z=D(\mathbf
p,z)-\sum_iR_i\cos(q_i-\varphi_i)$. Since $\int_{\mathbb
T^2}f(\mathbf q)\,d\mathbf q$ is invariant under $q_i\mapsto
q_i+\varphi_i$, integrating out $q_1$ first via the classical identity
$\displaystyle\tfrac1{2\pi}\int_0^{2\pi}\tfrac{d\theta_1}{a-b\cos\theta_1}=\tfrac1{\sqrt{a^2-b^2}}$
($a=D-R_2\cos\theta_2$, $b=R_1$; valid since $D>R_1+R_2$ below the
essential spectrum) and relabelling $\theta_2=\theta$ yields~\eqref{eq:generalK}.
\end{proof}

\begin{remark}[Independence of quantum statistics]\label{rem:statistics}
Neither Theorem~\ref{thm:uniform} nor Lemma~\ref{lem:criterion} below
makes any use of the bosonic symmetrization of the underlying
wavefunction: Theorem~\ref{thm:uniform} depends only on the
pairwise-summed structure of the dispersion $E_{\mathbf K}(\mathbf
p,\mathbf q)$, and Lemma~\ref{lem:criterion} is stated for a generic
finite-rank self-adjoint principal-part operator. Both therefore apply
verbatim to fermionic and mixed-statistics few-body lattice models of
the same Birman--Schwinger type~\cite{Companion,Khalkhuzhaev2025,AbdullaevErgashova2025},
where only the invariant-subspace decomposition into symmetric and
antisymmetric sectors — not the reduction or the criterion themselves
— depends on the statistics.
\end{remark}

\begin{corollary}[Exact closed-form benchmark]\label{cor:closedform}
For every $\mathbf K\in\mathbb T^2$, at the flat momentum
$\mathbf p^*(\mathbf K):=\mathbf K-(\pi,\pi)\pmod{2\pi}$, one has
$R_1(\mathbf K,\mathbf p^*)=R_2(\mathbf K,\mathbf p^*)=0$ and
\begin{equation}\label{eq:closedform}
\Delta_\mu\big(\mathbf p^*(\mathbf K),\mathbf K,z\big)
=1-\frac{\mu}{4-z+\varepsilon(\mathbf p^*(\mathbf K))}\qquad\text{\emph{exactly, for every }}\mu,z.
\end{equation}
\end{corollary}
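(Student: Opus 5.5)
The plan is to specialise Theorem~\ref{thm:uniform} to $\mathbf p=\mathbf p^*(\mathbf K)$. As an independent check, we also verify directly that the dispersion becomes constant in $\mathbf q$ there. That check also justifies the phrase ``for every $\mu,z$'' beyond the range in which Theorem~\ref{thm:uniform} was proved.

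\emph{Step 1: the amplitudes vanish.} By definition of $\mathbf p^*$, we have $K_i-p^*_i=\pi+2\pi k_i$ for some $k_i\in\mathbb Z$, the integer coming from the reduction mod $2\pi$. Hence
\[
R_i(\mathbf K,\mathbf p^*)=2\bigl|\cos\bigl(\tfrac\pi2+\pi k_i\bigr)\bigr|=0,\qquad i=1,2.
\]
The only point needing care is that the branch of the representative in $(-\pi,\pi]$ is irrelevant. This holds because $|\cos(\cdot/2)|$ is $2\pi$-periodic in its argument up to sign, and the absolute value removes the sign.

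\emph{Step 2: evaluate the reduced integral.} With $R_1=R_2=0$, the integrand in \eqref{eq:generalK} is $1/\sqrt{D^2}=1/|D|$. Below the essential spectrum $D(\mathbf p^*,z)>R_1+R_2=0$, so $|D|=D$. The $\theta$-integral then gives
\[
\frac{\mu}{2\pi}\cdot\frac{2\pi}{D}=\frac{\mu}{\varepsilon(\mathbf p^*)+4-z}.
\]
Substituting into \eqref{eq:Delta_def} yields \eqref{eq:closedform}.

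\emph{Step 3: direct verification, which also removes the restriction on $z$.} I expect the only real subtlety to be the claim ``for every $\mu,z$'', since Theorem~\ref{thm:uniform} was stated only below the essential spectrum. To handle it, compute $E_{\mathbf K}(\mathbf p^*,\mathbf q)$ directly. Since $\mathbf K-\mathbf p^*-\mathbf q\equiv(\pi,\pi)-\mathbf q$, we have $\cos(\pi-q_i)=-\cos q_i$, so
\[
\varepsilon(\mathbf q)+\varepsilon(\mathbf K-\mathbf p^*-\mathbf q)=(2-\cos q_1-\cos q_2)+(2+\cos q_1+\cos q_2)=4.
\]
Thus $E_{\mathbf K}(\mathbf p^*,\mathbf q)=\varepsilon(\mathbf p^*)+4$ is independent of $\mathbf q$. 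The normalized integral in \eqref{eq:Delta_def} is then the constant $\mu/(\varepsilon(\mathbf p^*)+4-z)$ for every $z\neq\varepsilon(\mathbf p^*)+4$, by analytic continuation or simply by direct evaluation. This shows \eqref{eq:closedform} holds exactly, wherever it is defined, for all $\mu$ and $z$.
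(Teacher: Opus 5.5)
Your proposal is correct. Steps~1--2 reproduce the paper's proof: $R_1=R_2=0$ at $\mathbf p^*$, so the integrand in \eqref{eq:generalK} collapses to the constant $1/D(\mathbf p^*,z)$.

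Step~3 is a genuinely different and more elementary route. You observe directly that $\varepsilon(\mathbf q)+\varepsilon((\pi,\pi)-\mathbf q)\equiv 4$, which is the same identity the paper uses in Remark~\ref{rem:exact_two_particle} for the two-particle threshold. Hence $E_{\mathbf K}(\mathbf p^*,\cdot)\equiv\varepsilon(\mathbf p^*)+4$, and the normalized integral is trivially $1/(\varepsilon(\mathbf p^*)+4-z)$.

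This argument does not need Theorem~\ref{thm:uniform} at all. It is also the one that actually delivers the clause ``for every $\mu,z$'', indeed for every complex $z\neq\varepsilon(\mathbf p^*)+4$. The paper's route inherits the theorem's hypothesis that $z$ lies below the essential spectrum. For $z>\varepsilon(\mathbf p^*)+4$, where $D<0$, the reduced formula read literally with $\sqrt{D^2}=|D|$ would return $\mu/|D|$, which has the wrong sign.

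What the paper's route buys instead is the reading of \eqref{eq:closedform} as the degenerate $R_1=R_2=0$ point of the $\mathbf K$-uniform reduction. That is what makes it a natural benchmark for the quadrature scheme built on \eqref{eq:generalK}.

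A minor wording point in Step~1: it is $\cos(x/2)$, not its absolute value, that changes sign under $x\mapsto x+2\pi$. The periodicity comment is unnecessary anyway, since $\cos(\tfrac\pi2+\pi k)=0$ for every integer $k$.
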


\begin{proof}
$R_i(\mathbf K,\mathbf p^*)=2|\cos\tfrac{K_i-p^*_i}2|=2|\cos\tfrac\pi2|=0$;
substituting into~\eqref{eq:generalK}--\eqref{eq:generalK2} collapses
the integral to the constant integrand $1/D(\mathbf p^*,z)$.
\end{proof}

At $\mathbf K=\boldsymbol\pi$ this gives $\mathbf p^*=\mathbf 0$ and
$\Delta_\mu(\mathbf 0,\boldsymbol\pi,z)=1-\mu/(4-z)$; at $\mathbf K=0$
it gives $\mathbf p^*=(\pi,\pi)$ and $\Delta_\mu((\pi,\pi),0,z)=1-\mu/(8-z)$
(used in Section~\ref{sec:K0}).

\subsection{Non-continuability of the parity reduction}

\begin{proposition}\label{prop:trim}
The kernel $A_\mu(\mathbf p,\mathbf q;\mathbf K,z)$ is invariant under
the simple parity transformation $(\mathbf p,\mathbf q)\mapsto(-\mathbf
p,-\mathbf q)$ for every $(\mathbf p,\mathbf q)$ if and only if
$2\mathbf K\equiv \mathbf 0\pmod{2\pi}$, i.e.
\[
\mathbf K\in\{0,\pi\}^2=\{(0,0),(0,\pi),(\pi,0),(\pi,\pi)\},
\]
the four time-reversal-invariant momenta (TRIM) of the square-lattice
Brillouin zone.
\end{proposition}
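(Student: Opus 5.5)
We will reduce invariance of the kernel to invariance of the dispersion $E_{\mathbf K}(\mathbf p,\mathbf q)$ and of the determinant $\Delta_\mu(\mathbf p,\mathbf K,z)$, and then read the condition on $\mathbf K$ off the explicit formulas of Theorem~\ref{thm:uniform}.

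\emph{Sufficiency.} Since $\varepsilon$ is even,
\[
E_{\mathbf K}(-\mathbf p,-\mathbf q)=\varepsilon(\mathbf p)+\varepsilon(\mathbf q)+\varepsilon(\mathbf K+\mathbf p+\mathbf q).
\]
If $2\mathbf K\equiv 0$, then $\mathbf K+\mathbf p+\mathbf q\equiv -(\mathbf K-\mathbf p-\mathbf q)$ modulo $2\pi$. Evenness and $2\pi$-periodicity of $\varepsilon$ then give $E_{\mathbf K}(-\mathbf p,-\mathbf q)=E_{\mathbf K}(\mathbf p,\mathbf q)$.

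For the determinant, substitute $\mathbf q\mapsto-\mathbf q$ in the Haar integral~\eqref{eq:Delta_def}. This gives $\Delta_\mu(-\mathbf p,\mathbf K,z)=\Delta_\mu(\mathbf p,\mathbf K,z)$. Alternatively, note that $D$ is even in $\mathbf p$ and
\[
R_i(\mathbf K,-\mathbf p)=2\big|\cos\tfrac{K_i+p_i}{2}\big|=2\big|\cos\tfrac{K_i-p_i}{2}\big|
\]
when $K_i\in\{0,\pi\}$. Hence every factor of $A_\mu$ is invariant, and so is $A_\mu$.

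\emph{Necessity.} Suppose the kernel is invariant for every $(\mathbf p,\mathbf q)$. The plan is to extract invariance of $E_{\mathbf K}$ itself. The main obstacle is that the kernel is a product, so invariance of the product does not immediately give invariance of each factor.

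First take $\mathbf q=\mathbf p$ and $\mathbf q=-\mathbf p$ to get identities involving $\Delta_\mu(\mathbf p)\Delta_\mu(-\mathbf p)$. A cleaner route is to use the $\mu$-dependence. Writing $\Delta_\mu=1-\mu I(\mathbf p,\mathbf K,z)$, where $I$ does not depend on $\mu$, the identity
\[
A_\mu(\mathbf p,\mathbf q)=A_\mu(-\mathbf p,-\mathbf q)
\]
holding for all $\mu$ (or, for fixed $\mu$, expanding in a suitable way) forces, after dividing by $\mu$, that
\[
(1-\mu I(\mathbf p))(1-\mu I(\mathbf q))\,(E(\mathbf p,\mathbf q)-z)^2
\]
is invariant as a polynomial in $\mu$. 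Comparing the $\mu^0$ coefficients gives $E_{\mathbf K}(\mathbf p,\mathbf q)=E_{\mathbf K}(-\mathbf p,-\mathbf q)$, using positivity of $E-z$.

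If only a fixed $\mu$ is allowed, we instead fix $\mathbf p$ and vary $\mathbf q$. The $\Delta$-factors then contribute a ratio depending on $\mathbf q$ only through $\Delta(\mathbf q)/\Delta(-\mathbf q)$, and we separate variables to reach the same conclusion up to a constant, which $\mathbf q=\mathbf p=0$ fixes to $1$.

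Once $E$ is invariant, cancel $\varepsilon(\mathbf p)+\varepsilon(\mathbf q)$ to get
\[
\varepsilon(\mathbf K+\mathbf s)=\varepsilon(\mathbf K-\mathbf s)\qquad\text{for all }\mathbf s.
\]
Expanding, $\cos(K_i+s_i)-\cos(K_i-s_i)=-2\sin K_i\sin s_i$. Taking $\mathbf s$ along each axis with $\sin s_i\neq0$ yields $\sin K_i=0$, that is, $K_i\in\{0,\pi\}$, which is equivalent to $2\mathbf K\equiv0$.
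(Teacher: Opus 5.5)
Your argument is correct, and its decisive step coincides with the paper's proof: cancel $\varepsilon(\mathbf p)+\varepsilon(\mathbf q)$, reduce to $\varepsilon(\mathbf K+\mathbf s)=\varepsilon(\mathbf K-\mathbf s)$, and read off $\sin K_i=0$.

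Where you go further is in the link between the kernel and the dispersion. The paper silently identifies invariance of $A_\mu$ with invariance of $E_{\mathbf K}$ and never mentions the factors $\Delta_\mu(\cdot,\mathbf K,z)^{-1/2}$. You close that gap in both directions:
\begin{itemize}
\item for sufficiency, you show $\Delta_\mu(-\mathbf p,\mathbf K,z)=\Delta_\mu(\mathbf p,\mathbf K,z)$, either via $\mathbf q\mapsto-\mathbf q$ or via $R_i$;
\item for necessity, you extract $E_{\mathbf K}(-\mathbf p,-\mathbf q)=E_{\mathbf K}(\mathbf p,\mathbf q)$ from the $\mu^0$ coefficient of $\Delta_\mu(\mathbf p)\Delta_\mu(\mathbf q)\bigl(E_{\mathbf K}(\mathbf p,\mathbf q)-z\bigr)^2$, or equivalently from $\lim_{\mu\to0}A_\mu/\mu$.
\end{itemize}
The necessity step is sound provided invariance is assumed for all $\mu$ in an interval at fixed $z$ (for instance small $\mu$, where $\Delta_\mu>0$ and the kernel is defined). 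Say this explicitly, since the statement does not quantify over $\mu$.

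The fixed-$\mu$ fallback paragraph, however, does not work as written. Separating variables gives
\[
E_{\mathbf K}(-\mathbf p,-\mathbf q)-z=g(\mathbf p)\,g(\mathbf q)\,\bigl(E_{\mathbf K}(\mathbf p,\mathbf q)-z\bigr),\qquad g=\sqrt{\Delta_\mu(\cdot)/\Delta_\mu(-\,\cdot)} .
\]
This is a product of two unknown functions, not a constant. Evaluating at $\mathbf p=\mathbf q=\mathbf 0$ only gives $g(\mathbf 0)=1$, which holds trivially, so nothing forces $g\equiv1$.

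To finish along this route you need a real argument. For instance, setting $\mathbf q=\mathbf 0$ identifies
\[
g(\mathbf p)=\frac{E_{\mathbf K}(-\mathbf p,\mathbf 0)-z}{E_{\mathbf K}(\mathbf p,\mathbf 0)-z}.
\]
Substitute this back and compare the $e^{2iq_1}$ Fourier coefficients of both sides at $p_2=0$; one finds
\[
\sin K_1\,\sin\tfrac{p_1}{2}\,\bigl(4-z+\cos K_1-\cos K_2\bigr)=0\qquad\text{for all }p_1 .
\]
The last factor is positive for $z<\min E_{\mathbf K}$, so $\sin K_1=0$, and likewise $\sin K_2=0$. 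Either supply such an argument, or drop the fallback and state the all-$\mu$ reading as your hypothesis. The paper's own proof does not address this point at all.
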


\begin{proof}
$E_{\mathbf K}(-\mathbf p,-\mathbf q)=\varepsilon(\mathbf
p)+\varepsilon(\mathbf q)+\varepsilon(\mathbf K+\mathbf p+\mathbf q)$
equals $E_{\mathbf K}(\mathbf p,\mathbf q)=\varepsilon(\mathbf
p)+\varepsilon(\mathbf q)+\varepsilon(\mathbf K-\mathbf p-\mathbf q)$
for all $\mathbf p,\mathbf q$ iff $\varepsilon(\mathbf K+\mathbf
x)=\varepsilon(\mathbf K-\mathbf x)$ for all $\mathbf x$, i.e.\ iff
$\varepsilon$ is even about $\mathbf K$. Since $\varepsilon(\mathbf
x)=2-\cos x_1-\cos x_2$ is even and $2\pi$-periodic in each
coordinate, this holds iff $2K_i\equiv0\pmod{2\pi}$ for $i=1,2$, i.e.
$K_i\in\{0,\pi\}$.
\end{proof}

\begin{remark}
Theorem~\ref{thm:uniform} and Corollary~\ref{cor:closedform} are
smooth (indeed real-analytic away from the essential spectrum points where $\cos((K_i-p_i)/2)=0$) in
$\mathbf K$; no singularity in $\mathbf K$ obstructs their use at a
generic quasimomentum. Proposition~\ref{prop:trim} shows that the
qualitative even/odd invariant-subspace machinery
is, by contrast, \emph{not} a restriction to
$\mathbf K\in\{0,\boldsymbol\pi\}$ of a smooth family valid for all
$\mathbf K$: it~is an additional algebraic fact available only on the
discrete TRIM set. A bound-state count at a generic $\mathbf K$ cannot
be obtained by naively substituting general $\mathbf K$ into the
$\mathbf K=0$ or $\mathbf K=\boldsymbol\pi$ proof; it requires working
directly with the non-block-diagonal operator $A_\mu(\mathbf K,z)$.
\end{remark}

\begin{remark}[Relation to time-reversal-invariant momenta in
topological band theory]\label{rem:fukane}
The~condition $2\mathbf K\equiv\mathbf 0\pmod{2\pi}$ singling out the
four points $\{0,\pi\}^2$ in Proposition~\ref{prop:trim} is exactly
the defining condition for the \emph{time-reversal-invariant momenta}
(TRIM) of the square-lattice Brillouin zone, familiar from the
parity-based classification of $\mathbb Z_2$ topological invariants of
single-particle band structures~\cite{FuKane2007}. The present result
may be read as a few-body analogue: whereas Fu and Kane use the
inversion parity of occupied Bloch states at TRIM points to classify
single-particle topology, here the parity of the Birman--Schwinger
kernel at TRIM points controls the existence of an invariant-subspace
reduction governing multi-particle bound-state counting. We are not
aware of this parallel having been drawn explicitly elsewhere, and we
record it here as a possible bridge between the discrete few-body
spectral theory literature and the topological band theory literature
(see Table~\ref{tab:comparison} and Section~\ref{sec:related}).
\end{remark}

\section{An order-matching criterion for strong-coupling
Fredholm-determinant asymptotics}\label{sec:criterion}

We now formulate, in a model-independent way, the mechanism
responsible for the refinement carried out in
Section~\ref{sec:corrected}. The setting is deliberately kept general:
a family of finite-rank principal-part Birman--Schwinger operators
depending on a large parameter $\mu$, whose leading eigenvalue is
sought along a strong-coupling trial branch.

\begin{remark}[Branch selection precedes order matching]\label{rem:branch_first}
The criterion below is a statement about a \emph{given} trial branch
$z_0(\mu)=-c\mu$. It does not, by itself, identify the correct branch:
a finite-rank principal part captures only a finite-dimensional shadow
of the full Birman--Schwinger operator, and a formal crossing of its
leading eigenvalue through $1$ is necessary, but not sufficient, for a
true eigenvalue of $H_\mu$. In particular, an expansion correctly
organized in a wrong branch can produce a formal crossing that does not
survive as a discrete eigenvalue of the full operator. The correct
strong-coupling branch must therefore be fixed first — by a variational
(minimax) argument as in Section~\ref{sec:groundstate} — and only then
is the order-matching criterion applied to refine the additive
constant within that branch.
\end{remark}

\begin{definition}[Admissible trial branch]\label{def:branch}
Let $z_0:(\mu_0,\infty)\to\mathbb R$ be a known leading-order energy
branch (e.g.\ $z_0(\mu)=-c\mu$ for some $c>0$), and let $\lambda(z;\mu)$
be the leading eigenvalue of a finite-rank self-adjoint principal-part
operator $A_\mu^p(z)$ obtained from a Fredholm determinant $\Delta_\mu$
that is known only through a \emph{leading-order asymptotic formula}
$\Delta_\mu\approx\Delta_\mu^{(0)}$, valid to relative accuracy
$O(\mu^{-1})$: $\Delta_\mu=\Delta_\mu^{(0)}\big(1+O(\mu^{-1})\big)$,
uniformly for $z=z_0(\mu)+d$ with $d$ in compact subsets of $\mathbb
R$. We call $z=z_0(\mu)+d$, $d=O(1)$, an \emph{admissible trial
branch}, and we say the associated \emph{crossing constant} $d_0$ is
the value for which $\lambda(z_0(\mu)+d_0;\mu)\to1$ as $\mu\to\infty$
along the true spectral branch, so that the sought asymptotic is
$z_1(\mu)=z_0(\mu)+d_0+O(\mu^{-1})$.
\end{definition}

\begin{lemma}[Order-matching criterion]\label{lem:criterion}
Under the hypotheses of Definition~\ref{def:branch}, suppose
$\lambda(z_0(\mu)+d;\mu)$ admits, for each fixed $d$, an asymptotic
expansion
\begin{equation}\label{eq:crit_expansion}
\lambda\big(z_0(\mu)+d;\mu\big)=\Lambda_0(d)+\frac{\Lambda_1(d)}\mu+O(\mu^{-2}),
\end{equation}
uniformly for $d$ in compact sets, where $\Lambda_0$ is computable
from $\Delta_\mu^{(0)}$ alone (i.e.\ from the leading-order Fredholm
determinant formula, without knowledge of its relative-order-$O(\mu^{-1})$
correction).
\begin{enumerate}
\item[\textup{(i)}] \textbf{(Non-degenerate case.)} If $\Lambda_0(d_0)=1$ for
some $d_0$ with $\Lambda_0'(d_0)\ne0$, then $d_0$ is correctly
determined by $\Delta_\mu^{(0)}$ alone: the leading-order Fredholm
determinant formula \emph{suffices} to fix the crossing constant, and
$z_1(\mu)=z_0(\mu)+d_0+O(\mu^{-1})$.
\item[\textup{(ii)}] \textbf{(Degenerate case.)} If instead
$\Lambda_0\equiv1$ identically on the relevant range of $d$ — i.e.\
the leading-order formula predicts $\lambda\to1$ at \emph{every} fixed
$d$ as $\mu\to\infty$ — then $\Lambda_0$ carries no information about
$d_0$: the crossing constant is the unique solution of
\begin{equation}\label{eq:crit_next}
\Lambda_1(d_0)=0,
\end{equation}
and $\Lambda_1$ is determined by the relative-order-$O(\mu^{-1})$
correction to $\Delta_\mu$ that $\Delta_\mu^{(0)}$ does not specify. In
this case the leading-order Fredholm determinant formula alone is
\emph{insufficient} to fix $d_0$: substituting $\Delta_\mu^{(0)}$ as if
it were exact into the eigenvalue equation $\lambda=1$ produces, in
general, a value of $d$ that is wrong by an $O(1)$ amount, because the
substitution silently sets the (in fact non-vanishing, $d$-dependent)
term $\Lambda_1(d)/\mu$ to a spurious value determined by an
uncontrolled $O(\mu^{-1})$ error.
\end{enumerate}
\end{lemma}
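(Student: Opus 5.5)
The plan is to treat the crossing condition as a one-parameter root-finding problem in $d$ and apply a quantitative implicit-function argument to the expansion~\eqref{eq:crit_expansion}. Throughout, we fix a compact interval $J$ of $d$-values containing the candidate crossing constant. Write
\[
F_\mu(d):=\lambda\big(z_0(\mu)+d;\mu\big)-1 .
\]
By~\eqref{eq:crit_expansion}, $F_\mu(d)=\Lambda_0(d)-1+\Lambda_1(d)/\mu+O(\mu^{-2})$ uniformly on $J$. The true spectral branch corresponds, via Definition~\ref{def:branch}, to a root $d_\mu\in J$ of $F_\mu$. The whole proof is then a comparison of $d_\mu$ with the zeros of the leading coefficient functions.

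For case (i), we use that $\lambda(z;\mu)$, as the leading eigenvalue of a finite-rank self-adjoint analytic family $A^p_\mu(z)$, is continuous in $z$ and real-analytic away from crossings. Hence $\Lambda_0$ is $C^1$ near $d_0$, and we assume the expansion may be differentiated in $d$; this holds if it comes from an expansion of $A^p_\mu(z_0+d)$ in powers of $1/\mu$ with coefficients analytic in $d$. Since $\Lambda_0(d_0)=1$ and $\Lambda_0'(d_0)\neq0$, $\Lambda_0-1$ changes sign on $[d_0-\delta,d_0+\delta]$ with $|\Lambda_0(d)-1|\ge c|d-d_0|$ there.

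The uniform $O(\mu^{-1})$ bound on $F_\mu-(\Lambda_0-1)$ then gives three conclusions:
\begin{itemize}
\item by the intermediate value theorem, $F_\mu$ has a root in the window $|d-d_0|\le C/\mu$;
\item by the derivative bound, this root is unique in $[d_0-\delta,d_0+\delta]$;
\item outside the window $|F_\mu|$ is bounded away from zero, so no other root of $F_\mu$ in $J$ lies near $d_0$.
\end{itemize}
Thus $d_\mu=d_0+O(\mu^{-1})$, i.e.\ $z_1=z_0+d_0+O(\mu^{-1})$. Since $\Lambda_0$ is by hypothesis computable from $\Delta^{(0)}_\mu$ alone, $d_0$ is fixed by the leading-order formula.

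For case (ii), with $\Lambda_0\equiv1$ on $J$, we rescale to $G_\mu(d):=\mu F_\mu(d)=\Lambda_1(d)+O(\mu^{-1})$. Repeating the argument of case (i) with $\Lambda_1$ in place of $\Lambda_0-1$ shows that roots of $F_\mu$ converge to roots of $\Lambda_1$. The root is simple provided $\Lambda_1'(d_0)\ne0$, which I would add as the natural non-degeneracy hypothesis implicit in the word ``unique''. This establishes~\eqref{eq:crit_next}.

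The insufficiency claim is proved by a perturbation-of-the-input argument. Consider two determinants $\Delta_\mu$ and $\tilde\Delta_\mu=\Delta_\mu(1+\eta(d)/\mu)$ that share the same $\Delta^{(0)}_\mu$. Their principal parts differ by an $O(\mu^{-1})$ perturbation, which by first-order perturbation theory of the simple top eigenvalue shifts $\Lambda_1$ by a term $\langle\psi,\delta A\,\psi\rangle$ that depends on $\eta$. Generically this moves the zero of $\Lambda_1$ by an $O(1)$ amount, while $\Lambda_0$ stays identically $1$. Hence $\Delta^{(0)}_\mu$ cannot determine $d_0$, and substituting it as if it were exact fixes $\Lambda_1$ at the spurious value coming from $\eta\equiv0$.

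The main obstacle is this last step, because the lemma is phrased semi-informally (``in general'', ``uniquely''). I would make it rigorous with an explicit rank-one example. The natural candidate is the $\mathbf K=\boldsymbol\pi$ principal part, using Corollary~\ref{cor:closedform} to write $\Delta$ exactly. There a relative $O(\mu^{-1})$ change in $\Delta$ demonstrably shifts $d_0$.
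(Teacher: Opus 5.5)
Your proposal is correct and follows essentially the same route as the paper: a local root argument at the simple zero of $\Lambda_0-1$ for (i), the rescaling $\mu(\lambda-1)=\Lambda_1+O(\mu^{-1})$ for (ii), and non-identifiability of $\Lambda_1$ within the class $\Delta_\mu^{(0)}(1+O(\mu^{-1}))$ for the insufficiency claim. The differences are that you replace the paper's implicit-function-theorem step by an equivalent intermediate-value argument with a linear lower bound, and that you state hypotheses the paper uses tacitly: differentiability of the expansion in $d$ for uniqueness in (i), and $\Lambda_1'(d_0)\ne0$ behind the word ``unique'' in (ii).

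One small correction concerns your explicit example. Corollary~\ref{cor:closedform} fixes $\Delta_\mu$ only at $\mathbf p^*$, so it cannot supply the $\mathbf p$-average that your example needs. No exact formula is required, however. Replacing $\Delta_\mu$ by $\Delta_\mu(1+c/\mu)$ multiplies the whole Birman--Schwinger kernel, and hence $\lambda$, by $(1+c/\mu)^{-1}$. With $\Lambda_0\equiv1$ this gives $\Lambda_1\mapsto\Lambda_1-c$, so for $\Lambda_1(d)=d-6$ the zero moves from $d_0$ to $d_0+c$.
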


\begin{proof}
(i) By~\eqref{eq:crit_expansion} and the implicit function theorem
applied to $\Lambda_0(d)-1=0$ at the simple root $d_0$: for $\mu$
large, $\lambda(z_0(\mu)+d;\mu)-1=\Lambda_0'(d_0)(d-d_0)+O((d-d_0)^2)+O(\mu^{-1})$,
so the equation $\lambda=1$ has a unique root $d(\mu)=d_0+O(\mu^{-1})$
near $d_0$, using only that $\Lambda_0'(d_0)\ne0$ — a fact intrinsic to
$\Delta_\mu^{(0)}$, since $\Lambda_0$ is computable from
$\Delta_\mu^{(0)}$ alone by hypothesis. The $O(\mu^{-1})$ correction to
$d$ (which does depend on $\Lambda_1$, hence on the discarded term)
affects only the \emph{next} order of the energy asymptotic
$z_1(\mu)=z_0(\mu)+d_0+O(\mu^{-1})$, not the leading constant $d_0$
itself.

(ii) If $\Lambda_0\equiv1$, then $\lambda(z_0(\mu)+d;\mu)-1=\Lambda_1(d)/\mu+O(\mu^{-2})$
for every fixed $d$; the equation $\lambda(z_0(\mu)+d;\mu)=1$ to
leading order in $1/\mu$ is exactly~\eqref{eq:crit_next}, which by
hypothesis cannot be evaluated from $\Delta_\mu^{(0)}$ alone. A
computation that uses $\Delta_\mu^{(0)}$ as an exact substitute for
$\Delta_\mu$ effectively computes $\Lambda_1$ from an arbitrary
(uncontrolled) representative of the equivalence class
$\Delta_\mu^{(0)}(1+O(\mu^{-1}))$, whose zero $\tilde d_0$ need not
equal, and generically does not equal, the true $d_0$
solving~\eqref{eq:crit_next} with the correct relative-order term.
\end{proof}

\begin{remark}
Case (ii) of Lemma~\ref{lem:criterion} is not a pathology confined to
the present model: it occurs whenever the trial branch $z_0(\mu)$ is
chosen so that the leading-order eigenvalue formula is
\emph{identically} at threshold along the whole branch — a situation
that arises naturally whenever the finite-rank kernel's leading order
in $1/\mu$ has been engineered (by the choice of expansion point) to
match the scale of $z_0(\mu)$ exactly. Practitioners constructing
principal-part reductions of this type should check, before trusting
an $O(1)$ additive constant, which case of Lemma~\ref{lem:criterion}
applies to their branch; this requires only computing $\Lambda_0$ from
the leading-order Fredholm determinant and checking whether it is
$d$-dependent. As emphasized in Remark~\ref{rem:branch_first},
this check is meaningful only after the correct branch $z_0(\mu)$ has
been selected.
\end{remark}

\begin{example}[Order-matching degeneracy in the formal branch
$z=-2\mu+d$ at $\mathbf K=\boldsymbol\pi$]\label{ex:pi}
Take $z_0(\mu)=-2\mu$ and $\Delta_\mu^{(0)}(\mathbf
p,z)=\mu(\varepsilon(\mathbf p)+\delta)/\big[(2-z_\mu(\boldsymbol\pi))(18-z)\big]$,
$\delta=z_\mu(\boldsymbol\pi)-z$. A direct computation
(Section~\ref{sec:corrected}) gives $\Lambda_0(d)\equiv1$ for all $d$ —
case (ii) applies, and the relative-order term is
$\Lambda_1(d)=d-6$, whose zero is $d_0=6$, giving the formal crossing
$z=-2\mu+6$. This formal crossing is a statement about the
rank-one principal part in the branch $z=-2\mu+d$; it is not, by itself,
a ground-state result. The variational bound of
Theorem~\ref{thm:groundstate} shows that the ground state lies in the
branch $z=-3\mu+O(1)$, and direct diagonalization
(Section~\ref{sec:hp_numerics}) does not exhibit an eigenvalue near
$-2\mu+6$. The example thus illustrates both the order-matching
degeneracy (case~(ii)) and the necessity of preceding it by a
branch-selection step (Remark~\ref{rem:branch_first}).
\end{example}

\begin{example}[Success at $\mathbf K=0$, second bound state]\label{ex:K0}
Take $z_0(\mu)=z_\mu(0)$ (the two-particle threshold; here $d$ plays
the role of $-\delta$, $\delta=z_\mu(0)-z$ fixed as $\mu\to\infty$) and
$\Delta_\mu^{(0)}(\mathbf
p,z)=\mu(\varepsilon(\mathbf p)+\delta)/[(2-z_\mu(0))(6-z)]$. Here
$\Lambda_0(\delta)=g(\delta)=2+\delta-1/b_0(\delta)$ is a genuinely
non-constant, $O(1)$-varying function of $\delta$, with $g(0^+)=2\ne1$
and $g'(\delta_\infty)\ne0$ at the root $g(\delta_\infty)=1$ — case (i)
applies. The~leading-order formula therefore correctly determines
$\delta_\infty\approx0.035420$ (hence $C=4-\delta_\infty\approx3.96458$),
as independently confirmed in Section~\ref{sec:K0}.
\end{example}

\section{Variational ground-state bound at $\mathbf K=\boldsymbol\pi$}\label{sec:groundstate}

We now fix the correct strong-coupling branch at
$\mathbf K=\boldsymbol\pi$ by a variational argument, before applying
the order-matching criterion within that branch. The argument is a
minimax bound that is uniform in $\mu$ and holds for every
$\mathbf K\in\mathbb T^2$.

\begin{theorem}[Variational ground-state bounds]\label{thm:groundstate}
For every $\mathbf K\in\mathbb T^2$ and every $\mu>0$, the lowest
eigenvalue of the restriction $H_\mu(\mathbf K)|_{L_2^s}$ (when it
exists) satisfies
\[
-3\mu\;\le\;\inf\sigma\bigl(H_\mu(\mathbf K)|_{L_2^s}\bigr)\;\le\;6-3\mu.
\]
At $\mathbf K=\boldsymbol\pi$ the constant function
$f_0\equiv 1\in L_2^{e,s}$ realizes the upper bound, so in particular
\[
z_1^{\boldsymbol\pi,s}(\mu)\le 6-3\mu<-2\mu+6\qquad\text{for all }\mu>0,
\]
and the ground state lies in the strong-coupling branch
$z_1^{\boldsymbol\pi,s}(\mu)=-3\mu+O(1)$, not in the branch
$z=-2\mu+O(1)$.
\end{theorem}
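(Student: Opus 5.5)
The plan is to prove both bounds directly from the quadratic form of $H_\mu(\mathbf K)=E_{\mathbf K}-\mu(V_1+V_2+V_3)$ on $L_2^s((\mathbb T^2)^2)$, with the normalized Haar measure, and to use the min--max principle. For the upper bound the constant function will serve as the trial state, for every $\mathbf K$. Neither bound uses Theorem~\ref{thm:uniform} or Lemma~\ref{lem:criterion}.

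\emph{Lower bound.} Since $\varepsilon\ge 0$ pointwise, the multiplication operator $E_{\mathbf K}(\mathbf p,\mathbf q)$ is nonnegative. Each $V_i$ is an orthogonal projection, so $0\le V_i\le I$ and therefore $V_1+V_2+V_3\le 3I$. It follows that
\[
\langle H_\mu(\mathbf K)f,f\rangle\ \ge\ -3\mu\|f\|^2
\]
for every $f$. This gives $\inf\sigma(H_\mu(\mathbf K)|_{L_2^s})\ge-3\mu$, and in particular it bounds the lowest eigenvalue whenever one exists.

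\emph{Upper bound.} Take $f_0\equiv1$. It is bosonic-symmetric and even, so $f_0\in L_2^{e,s}$, and $\|f_0\|=1$.

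First, the kinetic part. With normalized Haar measure each of $\varepsilon(\mathbf p)$, $\varepsilon(\mathbf q)$ and $\varepsilon(\mathbf K-\mathbf p-\mathbf q)$ averages to $2$, because the cosines integrate to zero and $\mathbf K-\mathbf p-\mathbf q$ is again Haar-distributed. Hence $\langle E_{\mathbf K}f_0,f_0\rangle=6$.

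Second, the interaction part. Each contact projection $V_i$ acts by averaging $f$ over the relative momentum of the interacting pair, with the pair's total momentum held fixed. It is therefore the orthogonal projection onto functions depending only on that pair momentum, so it fixes constants: $V_if_0=f_0$. This gives $\langle(V_1+V_2+V_3)f_0,f_0\rangle=3$.

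Combining the two, the Rayleigh quotient of $f_0$ equals $6-3\mu$ for every $\mathbf K$. The min--max principle then yields $\inf\sigma\le 6-3\mu$, and the infimum may be taken over $L_2^{e,s}$ at $\mathbf K=\boldsymbol\pi$. In particular $z_1^{\boldsymbol\pi,s}(\mu)\le 6-3\mu<6-2\mu$ for $\mu>0$. Combined with the lower bound this gives $z_1^{\boldsymbol\pi,s}(\mu)=-3\mu+O(1)$, which excludes the branch $-2\mu+O(1)$.

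The main obstacle is bookkeeping rather than analysis. The argument needs the precise normalization of the $V_i$ in the paper's conventions, namely that the kernel constant together with the normalized Haar measure makes each $V_i$ an honest projection fixing constants, rather than a multiple of one. I would check this first by writing $V_1$ explicitly as
\[
(V_1f)(\mathbf p,\mathbf q)=\int_{\mathbb T^2} f(\mathbf p,\mathbf s)\,d\mathbf s,
\]
up to the change of variables matching the paper's coordinates, and verifying that $V_1^2=V_1$ and $V_1 1=1$. The same normalization gives both $V_i\le I$ for the lower bound and $\langle V_if_0,f_0\rangle=1$ for the upper bound.

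A secondary point concerns the phrase ``when it exists''. Both estimates bound $\inf\sigma$ itself, so they hold whether that infimum is a discrete eigenvalue or an edge of the essential spectrum.
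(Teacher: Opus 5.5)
Your proposal is correct and follows essentially the same route as the paper. The lower bound comes from $E_{\mathbf K}\ge0$ together with $0\le V_1+V_2+V_3\le 3I$, and the upper bound from the Rayleigh quotient of $f_0\equiv1$, using $V_\alpha f_0=f_0$ and $\langle E\rangle=6$. You also compute $\langle E_{\mathbf K}f_0,f_0\rangle=6$ for every $\mathbf K$, using invariance of Haar measure under $\mathbf q\mapsto\mathbf K-\mathbf p-\mathbf q$. This supplies the general-$\mathbf K$ upper bound that the statement asserts, whereas the paper's proof carries it out only at $\mathbf K=\boldsymbol\pi$.
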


\begin{proof}
Write $H_\mu(\mathbf K)=H_0(\mathbf K)-\mu W$ with
$W=V_1+V_2+V_3$. Each $V_\alpha$ is an orthogonal projection
($V_\alpha^*=V_\alpha=V_\alpha^2$) with $\|V_\alpha\|=1$, hence
$0\le V_\alpha\le I$ and therefore $0\le W\le 3I$. Since
$H_0(\mathbf K)=E_{\mathbf K}\ge 0$ on $L_2^s$, we obtain
$H_\mu(\mathbf K)\ge -\mu W\ge -3\mu\,I$, giving the lower bound.

For the upper bound at $\mathbf K=\boldsymbol\pi$, the constant function
$f_0\equiv 1$ lies in $L_2^{e,s}$: it is invariant under
$(\mathbf p,\mathbf q)\mapsto(\mathbf q,\mathbf p)$ and under
$(\mathbf p,\mathbf q)\mapsto(\mathbf p,\boldsymbol\pi-\mathbf p-\mathbf q)$.
Each averaging projection fixes constants, $V_\alpha f_0=f_0$, hence
$W f_0=3 f_0$. By the Rayleigh--Ritz variational principle,
\[
z_1^{\boldsymbol\pi,s}(\mu)\le
\frac{\langle H_\mu(\boldsymbol\pi)f_0,f_0\rangle}{\langle f_0,f_0\rangle}
=\frac{\langle H_0(\boldsymbol\pi)f_0,f_0\rangle}{\langle f_0,f_0\rangle}-3\mu.
\]
Finally $\langle E_{\boldsymbol\pi}\rangle=3\langle\varepsilon\rangle=3\cdot 2=6$
(since $\langle\cos p\rangle=0$), giving
$z_1^{\boldsymbol\pi,s}(\mu)\le 6-3\mu$.
\end{proof}

\begin{corollary}\label{cor:branch_select}
The ground state at $\mathbf K=\boldsymbol\pi$ lies in the branch
$z_1^{\boldsymbol\pi,s}(\mu)=-3\mu+O(1)$. Consequently the formal
crossing $z=-2\mu+6$ obtained in Section~\ref{sec:corrected} from the
rank-one principal part in the branch $z=-2\mu+d$ cannot be the ground
state. The refinement of the additive constant within the branch
$z=-3\mu+d$ is the subject of Section~\ref{sec:appB}.
\end{corollary}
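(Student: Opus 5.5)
The corollary follows by combining the two-sided bound of Theorem~\ref{thm:groundstate} with a comparison of the two candidate branches; no new estimate on the Birman--Schwinger operator is needed. First, specialise Theorem~\ref{thm:groundstate} to $\mathbf K=\boldsymbol\pi$. The constant function $f_0\equiv1\in L_2^{e,s}$ shows that the spectrum of $H_\mu(\boldsymbol\pi)|_{L_2^s}$ reaches at least as low as $6-3\mu$. The operator bound $H_\mu\ge-3\mu I$ gives the matching lower bound. Hence
\[
-3\mu\le z_1^{\boldsymbol\pi,s}(\mu)\le 6-3\mu\qquad(\mu>0),
\]
so $|z_1^{\boldsymbol\pi,s}(\mu)+3\mu|\le 6$ uniformly in $\mu$, that is, $z_1^{\boldsymbol\pi,s}(\mu)=-3\mu+O(1)$.

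One point deserves a line of care: that this infimum is attained by a discrete eigenvalue, so that the phrase ``ground state'' makes sense. The essential spectrum of the fiber Hamiltonian lies above the two-particle threshold $z_\mu(\boldsymbol\pi)$. In the branch $-2\mu+O(1)$ that threshold sits at $z_\mu(\boldsymbol\pi)=-2\mu+O(1)$, consistent with the gap $2\mu-2+O(\mu^{-1})$ quoted in the abstract. For $\mu$ large, $6-3\mu$ therefore lies strictly below $\inf\sigma_{\mathrm{ess}}$. By the min--max principle, the bottom of the spectrum on $L_2^s$ is then a discrete eigenvalue $z_1^{\boldsymbol\pi,s}(\mu)$. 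If the reader prefers the statement only in the ``when it exists'' form of Theorem~\ref{thm:groundstate}, this step can simply be cited rather than reproved.

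Second, rule out the formal crossing. Any point of the branch $-2\mu+O(1)$ has the form $z=-2\mu+d$ with $d$ bounded. The formal crossing from Section~\ref{sec:corrected} has $d=6+8/\mu+O(\mu^{-2})$. Comparing with the upper bound,
\[
(-2\mu+d)-z_1^{\boldsymbol\pi,s}(\mu)\ge(-2\mu+d)-(6-3\mu)=\mu+d-6\to\infty .
\]
So for every $\mu>0$ the formal crossing lies strictly above $6-3\mu$; already $-2\mu+6>6-3\mu$. Being strictly larger than the minimum of the spectrum, it cannot equal $\inf\sigma(H_\mu(\boldsymbol\pi)|_{L_2^s})$, and so it cannot be the ground state. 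The branches $-3\mu+O(1)$ and $-2\mu+O(1)$ are also disjoint for large $\mu$, since their difference grows like $\mu$.

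The final sentence of the corollary, about refining the constant within $z=-3\mu+d$, is a forward reference and needs no proof here. The only step I expect to need any care is the discreteness point above, and it is not essential to the branch-selection conclusion. The argument is simply an ordering comparison against the variational upper bound $6-3\mu$ and goes through for all $\mu>0$.
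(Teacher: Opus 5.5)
Your argument is correct and is essentially the paper's: the corollary is read off directly from Theorem~\ref{thm:groundstate}. That theorem's two-sided bound $-3\mu\le z_1^{\boldsymbol\pi,s}(\mu)\le 6-3\mu$ fixes the branch $-3\mu+O(1)$, and its strict inequality $6-3\mu<-2\mu+6$ excludes the formal crossing.

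One slip occurs in your optional discreteness aside. The threshold is $z_\mu(\boldsymbol\pi)=-\mu+4$ exactly (Remark~\ref{rem:exact_two_particle}), not $-2\mu+O(1)$; your own gap check gives $(-3\mu+6)+(2\mu-2)=-\mu+4$. So $6-3\mu$ lies below the threshold precisely when $\mu>1$, and the conclusion of that aside still holds.
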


Theorem~\ref{thm:groundstate} fixes the leading-order coefficient
$-3\mu$ rigorously. The next-order constant $+6$ is consistent with the
Rayleigh quotient of the trial function $f_0\equiv 1$ and is
numerically confirmed to high precision (Section~\ref{sec:hp_numerics});
a fully rigorous derivation of
$z_1^{\boldsymbol\pi,s}(\mu)=-3\mu+6+O(\mu^{-1})$ requires a
strong-coupling perturbation expansion around the isolated eigenvalue
$3$ of $W$ in the symmetric sector, outlined in Section~\ref{sec:appB}.

\section{The formal principal-part branch $z=-2\mu+d$ at
$\mathbf K=\boldsymbol\pi$}\label{sec:corrected}

The computation of this section is retained from the previous
version of this note. Its algebra is correct as a statement about the
rank-one principal part of $A_\mu(\boldsymbol\pi,z)$ in the branch
$z=-2\mu+d$. Its previous interpretation as the ground-state asymptotics
is not: by Theorem~\ref{thm:groundstate} the ground state lies in the
branch $z=-3\mu+O(1)$, and direct diagonalization
(Section~\ref{sec:hp_numerics}) does not exhibit an eigenvalue near
$-2\mu+6$. We therefore present the computation as what it is — an
order-matching diagnostic in the branch $z=-2\mu+d$ — and record its
outcome as a formal crossing, not as a ground-state result.

We now carry out the computation flagged by Example~\ref{ex:pi} in
full. Set $z=-2\mu+d$, $d=O(1)$, and $\delta:=z_\mu(\boldsymbol\pi)-z=\mu+4-d$
(using $z_\mu(\boldsymbol\pi)=-\mu+4$ exactly, Remark~\ref{rem:exact_two_particle}
below).

\begin{lemma}[Second-order Fredholm determinant asymptotics]\label{lem:Delta_order2}
For $z=-2\mu+d+O(\mu^{-1})$, $d=O(1)$,
\begin{equation}\label{eq:Delta_order2}
\Delta_\mu(\mathbf p,\boldsymbol\pi,z)=\frac12+\frac{\varepsilon(\mathbf p)+4-d}{4\mu}+\frac{Q(\mathbf p,d)}{\mu^2}+O(\mu^{-3}),
\end{equation}
\[
Q(\mathbf p,d)=-2+d-\frac{d^2}8-\frac{9\varepsilon(\mathbf p)}8+\frac{\varepsilon(\mathbf p)\,d}4-\frac{\varepsilon(\mathbf p)^2}8.
\]
\end{lemma}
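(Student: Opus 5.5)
The plan is to start from the exact representation of Theorem~\ref{thm:uniform} at $\mathbf K=\boldsymbol\pi$ and expand it in inverse powers of the large denominator $D$. It is convenient to work at the $\mathbf q$-integral level, before the $\theta_1$-integration. By the proof of Theorem~\ref{thm:uniform},
\[
E_{\boldsymbol\pi}(\mathbf p,\mathbf q)-z=D-X(\mathbf q),\qquad X(\mathbf q)=\sum_{i=1}^2R_i\cos(q_i-\varphi_i),
\]
with $D=\varepsilon(\mathbf p)+4-z$. Set $z=-2\mu+d$, so that $D=2\mu+a$ with $a:=\varepsilon(\mathbf p)+4-d$. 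In particular $D\ge 2\mu-C$ uniformly for $\mathbf p\in\mathbb T^2$ and $d$ in a compact set, since $0\le\varepsilon\le4$. Moreover $|X|\le R_1+R_2\le4$. Hence for $\mu$ large the geometric series
\[
\frac1{D-X}=\sum_{k=0}^{4}\frac{X^k}{D^{k+1}}+O(D^{-6})
\]
converges uniformly in $\mathbf q$. I would then average it term by term with the normalized Haar measure.

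The next step is computing the moments of $X$. Using the shift invariance $q_i\mapsto q_i+\varphi_i$ and the independence of the two coordinates, we get $\langle X\rangle=\langle X^3\rangle=0$ and $\langle X^2\rangle=\tfrac12(R_1^2+R_2^2)$. At $\mathbf K=\boldsymbol\pi$ we have $R_i^2=2+2\cos(\pi-p_i)=2-2\cos p_i$, hence $\langle X^2\rangle=\varepsilon(\mathbf p)$. Therefore
\[
\frac{\mu}{4\pi^2}\int_{\mathbb T^2}\frac{d\mathbf q}{E_{\boldsymbol\pi}-z}=\frac{\mu}{D}+\frac{\mu\,\varepsilon(\mathbf p)}{D^3}+O(\mu D^{-5}).
\]
The error term is $O(\mu^{-4})$, which is more than enough.

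It remains to expand in $1/\mu$. Write $\mu/D=\tfrac12\bigl(1+\tfrac{a}{2\mu}\bigr)^{-1}$; this gives
\[
\frac{\mu}{D}=\frac12-\frac{a}{4\mu}+\frac{a^2}{8\mu^2}+O(\mu^{-3}),\qquad
\frac{\mu\,\varepsilon}{D^3}=\frac{\varepsilon}{8\mu^2}+O(\mu^{-3}).
\]
Hence
\[
\Delta_\mu=\frac12+\frac{a}{4\mu}-\frac{a^2+\varepsilon}{8\mu^2}+O(\mu^{-3}).
\]
Substituting $a=\varepsilon+4-d$ and expanding $a^2=\varepsilon^2+16+d^2+8\varepsilon-2\varepsilon d-8d$, the $\mu^{-2}$ coefficient becomes $-2+d-\tfrac{d^2}8-\tfrac{9\varepsilon}8+\tfrac{\varepsilon d}4-\tfrac{\varepsilon^2}8$, which is $Q(\mathbf p,d)$. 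All error terms are uniform in $\mathbf p$ and in $d$ on compact sets, because $\varepsilon$ and $R_i$ are bounded.

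The only delicate point, which I expect to be the main obstacle of interpretation rather than of computation, is the phrase $z=-2\mu+d+O(\mu^{-1})$. A shift $z\mapsto z+\eta/\mu$ changes $a$ by $O(\mu^{-1})$. Through the term $a/(4\mu)$, this alters the $\mu^{-2}$ coefficient by $-\eta/4$, so $Q$ is \emph{not} invariant under such a shift. I would therefore state the expansion with $d:=z+2\mu$ taken as the exact parameter, allowed to depend on $\mu$ but remaining bounded. Since all remainders are uniform in $d$ on compact sets, formula~\eqref{eq:Delta_order2} then holds verbatim, and any $O(\mu^{-1})$ part of $d$ is simply carried inside $d$ rather than discarded.
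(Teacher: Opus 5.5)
Your proof is correct and is essentially the paper's argument: a truncated Neumann series for the resolvent, averaged over $\mathbf q$ and re-expanded in $1/\mu$. The one difference is the expansion point. You expand around the $\mathbf q$-mean $D=\varepsilon(\mathbf p)+4-z$, using the phase-shifted form from Theorem~\ref{thm:uniform}, so $\langle X\rangle=\langle X^3\rangle=0$ and $\langle X^2\rangle=\varepsilon(\mathbf p)$. The paper expands around $12-z$ with the uncentered moments $\langle X\rangle=8-\varepsilon$ and $\langle X^2\rangle=\varepsilon^2-15\varepsilon+64$; its $X$ is yours plus $8-\varepsilon(\mathbf p)$.

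Your caveat about the hypothesis is also right. The $\mu^{-2}$ coefficient is not stable under an $O(\mu^{-1})$ shift of $z$, so $d$ must be read as exactly $z+2\mu$ (bounded, possibly $\mu$-dependent). That is how the paper's proof substitutes $12-z=2\mu+12-d$, and how Proposition~\ref{prop:formal_crossing} later uses the lemma with $d=6+e/\mu$.
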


\begin{proof}
From $E_{\boldsymbol\pi}-z=(12-z)-X(\mathbf p,\mathbf q)$,
$X=A_{\boldsymbol\pi}^c-A_{\boldsymbol\pi}^s$,
expand
$1/(E_{\boldsymbol\pi}-z)=(12-z)^{-1}\sum_{n\ge0}\big(X/(12-z)\big)^n$
to order $n=2$ and average over $\mathbf q$. With
$c(\mathbf p)=6+\cos p_1+\cos p_2$, $Y(\mathbf q)=\cos q_1+\cos
q_2$, $S(\mathbf p,\mathbf q)=\cos(p_1+q_1)+\cos(p_2+q_2)$, so that
$X=c(\mathbf p)+Y(\mathbf q)-S(\mathbf p,\mathbf q)$, independence of
$q_1,q_2$ gives $\langle Y\rangle_{\mathbf q}=\langle
S\rangle_{\mathbf q}=0$, $\langle Y^2\rangle_{\mathbf
q}=\langle S^2\rangle_{\mathbf q}=1$, $\langle
YS\rangle_{\mathbf q}=\tfrac12(\cos p_1+\cos p_2)$, whence
$\langle X\rangle_{\mathbf q}=8-\varepsilon(\mathbf p)$ and
$\langle X^2\rangle_{\mathbf q}=\varepsilon(\mathbf
p)^2-15\varepsilon(\mathbf p)+64$. Substituting
$\mu/(12-z)=\mu/(2\mu+12-d)$ and expanding in $1/\mu$ and after simplification we obtain $\langle X^2\rangle_{\mathbf q}=\varepsilon(\mathbf p)^2-15\varepsilon(\mathbf p)+64$, which gives~\eqref{eq:Delta_order2}.
As an independent check, at $\mathbf
p=\mathbf 0$ ($\varepsilon=0$), expanding the exact closed form of
Corollary~\ref{cor:closedform},
$\Delta_\mu(\mathbf 0,\boldsymbol\pi,-2\mu+d)=(\mu-d+4)/(2\mu-d+4)$,
in powers of $1/\mu$ reproduces~\eqref{eq:Delta_order2} at
$\mathbf p=\mathbf 0$ \emph{identically}.
\end{proof}

\begin{proposition}[Formal principal-part crossing in the branch
$z=-2\mu+d$]\label{prop:formal_crossing}
In the rank-one principal-part approximation of the even sector at
$\mathbf K=\boldsymbol\pi$, the leading even-sector eigenvalue admits
the expansion
\begin{equation}\label{eq:formal_crossing}
\lambda_1^{\boldsymbol\pi,e}(-2\mu+d)=1+\frac{d-6}\mu+\frac{d^2-12d+28}{\mu^2}+O(\mu^{-3}).
\end{equation}
Thus $\Lambda_0(d)\equiv 1$ (case~(ii) of Lemma~\ref{lem:criterion}),
and the relative-order term $\Lambda_1(d)=d-6$ has the zero $d_0=6$,
yielding the formal crossing
\[
z_{\mathrm{formal}}(\mu)=-2\mu+6+\frac{8}{\mu}+O(\mu^{-2}).
\]
\end{proposition}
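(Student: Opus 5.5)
The plan is to make the rank-one principal part explicit, write its single nonzero eigenvalue as an average over $\mathbf p$ of $1/\Delta_\mu$, and then expand that average using Lemma~\ref{lem:Delta_order2}. The crossing constant and the $8/\mu$ shift then follow from inverting the expansion order by order.

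\emph{Step 1: make the principal part explicit.} I would use the decomposition from the proof of Lemma~\ref{lem:Delta_order2}. In the branch $z=-2\mu+d$ one has
\[
E_{\boldsymbol\pi}(\mathbf p,\mathbf q)-z=(12-z)-X(\mathbf p,\mathbf q),\qquad 12-z=2\mu+12-d\gg X=O(1).
\]
The rank-one principal part of the even sector is obtained by replacing $E_{\boldsymbol\pi}-z$ with the constant $12-z$ in the kernel of $A_\mu$. Its kernel is then
\[
\frac{\mu}{2\pi^2}\,\frac{1}{(12-z)}\,\Delta_\mu(\mathbf p)^{-1/2}\Delta_\mu(\mathbf q)^{-1/2},
\]
which is rank one with range spanned by $\Delta_\mu^{-1/2}$. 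This vector is even, so it lies in the even sector. The unique nonzero eigenvalue is therefore
\[
\lambda_1^{\boldsymbol\pi,e}=\kappa\,\frac{\mu}{12-z}\,\big\langle \Delta_\mu^{-1}\big\rangle_{\mathbf p},
\]
where $\kappa$ is the normalisation constant fixed by the kernel prefactor, the Haar measure, and the bosonic symmetrisation of the even sector. I would pin down $\kappa$ by requiring consistency with the leading term. Since $\Delta_\mu\to\tfrac12$, the leading value is $\Lambda_0=\kappa\cdot\tfrac12\cdot 2$, so $\Lambda_0\equiv1$ forces $\kappa=1$ in the normalised convention. This bookkeeping of $\kappa$ is the step I expect to be most delicate. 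If the paper's convention differs, $\Lambda_0$ would be a $d$-independent constant other than $1$, and the statement would fail. So the first concrete task is to recompute the kernel normalisation carefully on $L_2^{e,s}$.

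\emph{Step 2: expand the average.} I would write $\Delta_\mu=\tfrac12\,(1+u)$ with
\[
u=\frac{\varepsilon+4-d}{2\mu}+\frac{2Q}{\mu^2}+O(\mu^{-3}),
\]
taken from Lemma~\ref{lem:Delta_order2}. Then
\[
\Delta_\mu^{-1}=2\,(1-u+u^2+O(u^3)),
\]
and I would average over $\mathbf p$ using the moments $\langle\varepsilon\rangle=2$ and $\langle\varepsilon^2\rangle=4+\langle(\cos p_1+\cos p_2)^2\rangle=5$. Separately, I would expand the prefactor
\[
\frac{\mu}{12-z}=\frac12\Big(1-\frac{12-d}{2\mu}+\frac{(12-d)^2}{4\mu^2}\Big)+O(\mu^{-3}).
\]
Multiplying the two expansions and collecting powers of $1/\mu$ should give $\Lambda_0\equiv1$ and $\Lambda_1(d)=d-6$. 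The latter arises from $-\tfrac12(6-d)-\tfrac12(12-d)$ up to the factor conventions, which I would verify. Collecting the $\mu^{-2}$ terms, which requires $\langle Q\rangle$ and $\langle(\varepsilon+4-d)^2\rangle$, should give $d^2-12d+28$. As a cross-check, I would replace the $\mathbf p$-dependence by the exact closed form at $\mathbf p=\mathbf 0$ from Corollary~\ref{cor:closedform}, which controls the constant part of $Q$.

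\emph{Step 3: solve $\lambda=1$.} Since $\Lambda_0\equiv1$, case~(ii) of Lemma~\ref{lem:criterion} applies. The equation becomes
\[
\Lambda_1(d)+\frac{\Lambda_2(d)}{\mu}+O(\mu^{-2})=0,\qquad \Lambda_2(d)=d^2-12d+28.
\]
Setting $d=6+d_1/\mu$, we have $\Lambda_1'(6)=1\neq0$, so the implicit function theorem gives a unique root. Its correction is
\[
d_1=-\frac{\Lambda_2(6)}{\Lambda_1'(6)}=-(36-72+28)=8,
\]
which yields $z_{\mathrm{formal}}=-2\mu+6+8/\mu+O(\mu^{-2})$. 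Uniformity of the remainders for $d$ in compact sets follows because the Neumann series in Lemma~\ref{lem:Delta_order2} converges uniformly, given $|X|\le 12<12-z$ for large $\mu$.
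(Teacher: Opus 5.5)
\textbf{There is a genuine gap: your Step~1 builds a different principal part from the paper's, and your own Step~2 arithmetic shows it fails.} Replacing $E_{\boldsymbol\pi}-z$ by the constant $12-z$ gives the prefactor
$\mu/(12-z)=\tfrac12+\tfrac{d-12}{4\mu}+\tfrac{(12-d)^2}{8\mu^2}+O(\mu^{-3})$.
Multiplying by $\langle\Delta_\mu^{-1}\rangle_{\mathbf p}=2-\tfrac{6-d}{\mu}+\tfrac{d^2-12d+38}{\mu^2}+O(\mu^{-3})$ gives
$\Lambda_1(d)=-\tfrac12(6-d)-\tfrac12(12-d)=d-9$.
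That is exactly the expression you wrote, and it equals $d-9$, not $d-6$. Your principal part would therefore place the crossing at $d_0=9$. The $\mu^{-2}$ coefficient would also come out as $d^2-\tfrac{33}{2}d+73$ rather than $d^2-12d+28$.

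No choice of $\kappa$ can repair this. $\kappa$ is a constant independent of $d$ and $\mu$, already pinned to $1$ by $\Lambda_0\equiv1$, whereas the discrepancy sits in the $1/\mu$ coefficient of the prefactor. So the step you flagged as delicate is not where the proof breaks; it breaks because you discarded $X$ entirely.

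\textbf{The missing ingredient is the averaged first-order Neumann term.} Since $|X|\le12$ and $12-z=2\mu+12-d$, the correction $X/(12-z)$ is $O(\mu^{-1})$. That is precisely the order of $\Lambda_1$, so it cannot be dropped; this is itself an instance of the order-matching issue the paper studies. The paper's proof instead takes the even-sector eigenvalue to be $\mu(18-z)(12-z)^{-2}\langle\Delta_\mu^{-1}\rangle_{\mathbf p}$. Note that
$\tfrac{18-z}{(12-z)^2}=\tfrac1{12-z}+\tfrac{6}{(12-z)^2}$, and $6=\langle\langle X\rangle\rangle=\langle 8-\varepsilon\rangle_{\mathbf p}$. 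So this prefactor keeps the mean of the first-order term $X/(12-z)^2$. Its expansion at $z=-2\mu+d$ is
$\tfrac12+\tfrac{d-6}{4\mu}+\tfrac{d^2-12d}{8\mu^2}+O(\mu^{-3})$.
Multiplying this by your (correct) expansion of $\langle\Delta_\mu^{-1}\rangle_{\mathbf p}$ gives exactly $1+\tfrac{d-6}{\mu}+\tfrac{d^2-12d+28}{\mu^2}$.

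With that prefactor substituted, your Steps~2 and~3 go through unchanged and coincide with the paper's argument, including $d_1=-\Lambda_2(6)/\Lambda_1'(6)=8$.
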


\begin{proof}
The leading even-sector eigenvalue is
$\lambda_1^{\boldsymbol\pi,e}(z)=\mu(18-z)(12-z)^{-2}\langle
\Delta_\mu^{-1}\rangle_{\mathbf p}$ (the rank-one dominant contribution
of the even-sector principal part; the prefactor
$\mu(18-z)(12-z)^{-2}$ arises from the explicit form of the kernel).
Using~\eqref{eq:Delta_order2},
write $\Delta_\mu=\tfrac12(1+u)$, $u=A(\mathbf p,d)/\mu+B(\mathbf
p,d)/\mu^2+O(\mu^{-3})$, $A=\tfrac12(\varepsilon(\mathbf p)+4-d)$,
$B=2Q(\mathbf p,d)$; then $\Delta_\mu^{-1}=2-2A/\mu+(2A^2-2B)/\mu^2+O(\mu^{-3})$.
Averaging over $\mathbf p$ using $\langle\varepsilon\rangle_{\mathbf
p}=2$, $\langle\varepsilon^2\rangle_{\mathbf p}=5$ gives
$\langle\Delta_\mu^{-1}\rangle_{\mathbf p}=2-(6-d)/\mu+(d^2-12d+38)/\mu^2+O(\mu^{-3})$.
Expanding the prefactor $\mu(18-z)/(12-z)^2$ at $z=-2\mu+d$ to the
same order and multiplying gives, after simplification (verified by
computer algebra),~\eqref{eq:formal_crossing}.
Comparing with~\eqref{eq:crit_expansion}: $\Lambda_0(d)\equiv1$
(case~(ii) of Lemma~\ref{lem:criterion}, confirming Example~\ref{ex:pi}),
and $\Lambda_1(d)=d-6$, whose unique root is $d_0=6$. Writing
$d=6+e/\mu$ and imposing $\lambda_1^{\boldsymbol\pi,e}=1+O(\mu^{-3})$
forces, at the next order (since $d^2-12d+28$ at $d=6$ equals $-8$),
$e-8=0$, i.e.\ $e=8$.
\end{proof}

\begin{remark}\label{rem:formal_not_ground}
The crossing $z_{\mathrm{formal}}(\mu)=-2\mu+6+8/\mu+O(\mu^{-2})$
of Proposition~\ref{prop:formal_crossing} is a statement about the
rank-one principal part of $A_\mu(\boldsymbol\pi,z)$ in the branch
$z=-2\mu+d$. It is \emph{not} a ground-state result:
Theorem~\ref{thm:groundstate} gives
$z_1^{\boldsymbol\pi,s}(\mu)\le 6-3\mu<-2\mu+6$ for $\mu>0$, placing the
ground state in the branch $z=-3\mu+O(1)$. Direct finite-volume
diagonalization of $H_\mu(\boldsymbol\pi)$
(Section~\ref{sec:hp_numerics}) confirms that the ground state lies at
$z_1^{\boldsymbol\pi,s}\approx -3\mu+6$ and exhibits no eigenvalue near
$-2\mu+6$; the value $-2\mu+6$ falls in the spectral gap between the
trimer ground state $-3\mu+6$ and the two-particle threshold $-\mu+4$.
\end{remark}

\begin{remark}\label{rem:exact_two_particle}
The two-particle threshold used above, $z_\mu(\boldsymbol\pi)=-\mu+4$,
is exact for every $\mu>0$: since $\varepsilon(\boldsymbol\pi-\mathbf
p)=4-\varepsilon(\mathbf p)$, the two-particle dispersion at $\mathbf
k=\boldsymbol\pi$ is identically constant,
$\varepsilon(\mathbf p)+\varepsilon(\boldsymbol\pi-\mathbf p)\equiv4$,
so $1=\mu\langle(\varepsilon(\mathbf p)+\varepsilon(\boldsymbol\pi-\mathbf
p)-z)^{-1}\rangle=\mu/(4-z)$ solves exactly to $z_\mu(\boldsymbol\pi)=-\mu+4$.
\end{remark}

\begin{corollary}[Spectral gap and inter-sector shift]\label{cor:gap}
For $\mu$ in the strong-coupling regime ($\mu>1$),
\[
2\mu-2\;\le\; z_\mu(\boldsymbol\pi)-z_1^{\boldsymbol\pi,s}(\mu)\;\le\;2\mu+4,
\]
with $z_1^s(\mu)=-3\mu+6+O(\mu^{-1})$ the $\mathbf K=0$
ground-state asymptotic (the same leading branch $z=-3\mu+O(1)$ as at $\mathbf K=\boldsymbol\pi$).
Numerically (Section~\ref{sec:hp_numerics}),
\[
z_\mu(\boldsymbol\pi)-z_1^{\boldsymbol\pi,s}(\mu)=2\mu-2+O(\mu^{-1}),\qquad
z_1^{\boldsymbol\pi,s}(\mu)-z_1^s(\mu)=O(\mu^{-1}),
\]
the latter confirming that both sectors share the leading branch
$-3\mu+6+O(\mu^{-1})$.
\end{corollary}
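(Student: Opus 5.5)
The two-sided bound is obtained by subtracting the two-sided bound of Theorem~\ref{thm:groundstate} from the exact value $z_\mu(\boldsymbol\pi)=-\mu+4$ of Remark~\ref{rem:exact_two_particle}. The numerical refinements then come from the finite-volume data of Section~\ref{sec:hp_numerics}.

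\textbf{Rigorous bounds.} Remark~\ref{rem:exact_two_particle} gives the threshold exactly, with no asymptotic error, for every $\mu>0$. Theorem~\ref{thm:groundstate} gives
\[
-3\mu\le z_1^{\boldsymbol\pi,s}(\mu)\le 6-3\mu,
\]
which equivalently reads $-6+3\mu\le -z_1^{\boldsymbol\pi,s}(\mu)\le 3\mu$. Adding $-\mu+4$ to each side yields
\[
2\mu-2\le z_\mu(\boldsymbol\pi)-z_1^{\boldsymbol\pi,s}(\mu)\le 2\mu+4 .
\]
The lower end is positive as soon as $\mu>1$. So in this regime the gap is open, and $z_1^{\boldsymbol\pi,s}$ is a genuine discrete eigenvalue below the essential spectrum. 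This justifies the phrase ``when it exists'' in Theorem~\ref{thm:groundstate}: the trial function $f_0\equiv1$ has Rayleigh quotient below the threshold, so the minimax principle produces an eigenvalue there. It also justifies the restriction $\mu>1$ in the statement. The same argument applied at $\mathbf K=0$ with the trial function $f_0$ places $z_1^s$ in the branch $-3\mu+O(1)$, as the statement asserts.

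\textbf{Numerical refinements.} These are not proved analytically here; I would cite the finite-volume diagonalisation of Section~\ref{sec:hp_numerics}. The check has two parts. First, the data should give $z_1^{\boldsymbol\pi,s}=-3\mu+6+O(\mu^{-1})$, which corresponds to the saturated upper bound of Theorem~\ref{thm:groundstate}. Second, subtracting this from $-\mu+4$ should give gap $=2\mu-2+O(\mu^{-1})$. Comparing it with the $\mathbf K=0$ asymptotic $z_1^s=-3\mu+6+O(\mu^{-1})$ then gives the $O(\mu^{-1})$ inter-sector shift.

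\textbf{Main obstacle.} A rigorous version of these refinements would require showing that the upper bound $6-3\mu$ is sharp to $O(\mu^{-1})$. The natural route is strong-coupling perturbation theory around the isolated top eigenvalue $3$ of $W$ on $L_2^s$:
\begin{itemize}
\item the first-order term is $\langle f_0,H_0f_0\rangle=6$;
\item second-order terms are of size $\|P^\perp H_0 f_0\|^2/(\mu\cdot\mathrm{gap}(W))$.
\end{itemize}
This needs a spectral gap of $W$ below $3$ in the symmetric sector. I would establish that gap first, by showing that $V_1,V_2,V_3$ have only constants in their common range. I expect this to be the hardest step, and I would defer it to Section~\ref{sec:appB}.
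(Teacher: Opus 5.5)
Your derivation of the two-sided bound is the paper's own proof. You subtract the minimax bounds of Theorem~\ref{thm:groundstate} from the exact threshold $z_\mu(\boldsymbol\pi)=-\mu+4$, and you defer the $O(\mu^{-1})$ refinements to the diagonalization of Section~\ref{sec:hp_numerics}, which the paper likewise does not derive analytically.

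\textbf{The added existence claim.} You assert that for every $\mu>1$ the Rayleigh quotient $6-3\mu$ of $f_0$ lies below the essential spectrum, so $z_1^{\boldsymbol\pi,s}$ is a genuine discrete eigenvalue. This is not justified. It is also not needed: the inequality holds for $\inf\sigma(H_\mu(\boldsymbol\pi)|_{L_2^s})$ whether or not that is an eigenvalue.

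The value $-\mu+4=z_\mu(\boldsymbol\pi)+\varepsilon(\mathbf 0)$ is only one point of the HVZ essential spectrum. The bottom of that spectrum is $\min_{\mathbf k}[z_\mu(\mathbf k)+\varepsilon(\boldsymbol\pi-\mathbf k)]$, where $z_\mu(\mathbf k)$ is the pair bound state at pair quasimomentum $\mathbf k$. Nothing forces this minimum to sit at $\mathbf k=\boldsymbol\pi$, and near $\mu=1$ it does not:
\begin{itemize}
\item[] Take $\mathbf k=(2\pi/3,2\pi/3)$. The pair dispersion is $4-\cos\theta_1-\cos\theta_2$, and at $\mu=1$ its bound state is $z\approx1.971$.
\item[] Since $\varepsilon(\boldsymbol\pi-\mathbf k)=1$, the channel energy is $\approx2.971$, which is below $6-3\mu=3$.
\item[] By continuity, $f_0$ does not certify a discrete eigenvalue for $\mu$ slightly above $1$.
\end{itemize}

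\textbf{What can be proved instead.} For the averaging projections, $V_\alpha V_\beta=P_0$ (the projection onto constants) whenever $\alpha\ne\beta$. Hence
\[
W=3P_0+\sum_\alpha(V_\alpha-P_0)
\]
is a sum of mutually orthogonal projections, and $\sigma(W)\subset\{0,1,3\}$. It follows that $H_\mu(\boldsymbol\pi)+2\mu P_0\ge E_{\boldsymbol\pi}-\mu\ge 3-\mu$. Since $2\mu P_0$ has rank one, $\inf\sigma_{\mathrm{ess}}\ge 3-\mu$, and a discrete ground state exists for $\mu>3/2$.

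The same identity removes your ``main obstacle'': it shows that $3$ is an isolated eigenvalue of $W$ with gap exactly $2$. Your proposed route would not suffice here. Showing only that the ranges of $V_1,V_2,V_3$ meet in the constants does not by itself yield a spectral gap for these infinite-rank projections.
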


\begin{proof}
Using $z_\mu(\boldsymbol\pi)=-\mu+4$ (Remark~\ref{rem:exact_two_particle})
and the minimax bounds of Theorem~\ref{thm:groundstate},
$-3\mu\le z_1^{\boldsymbol\pi,s}(\mu)\le -3\mu+6$, direct subtraction gives
$(-\mu+4)-(-3\mu+6)\le z_\mu-z_1\le (-\mu+4)-(-3\mu)$, i.e.\
$2\mu-2\le z_\mu(\boldsymbol\pi)-z_1^{\boldsymbol\pi,s}(\mu)\le 2\mu+4$.
The sharper numerical value $2\mu-2+O(\mu^{-1})$ and the inter-sector
shift $O(\mu^{-1})$ are confirmed by direct diagonalization
(Section~\ref{sec:hp_numerics}); their full analytic derivation requires
the $z=-3\mu+d$ expansion of Section~\ref{sec:appB}.
\end{proof}

\begin{proposition}[Diagnostic ratio for the rank-one principal part]\label{prop:factor}
Let $\lambda_{\mathrm{naive}}(z)$ denote the eigenvalue obtained by
substituting the leading-order formula $\Delta_\mu^{(0)}$ as if exact.
Then, at $z=-2\mu+6$,
\[
\lambda_{\mathrm{naive}}(-2\mu+6)=\frac12+O(\mu^{-2}),\qquad
\frac{\lambda_1^{\boldsymbol\pi,e}(-2\mu+6)}{\lambda_{\mathrm{naive}}(-2\mu+6)}=2+O(\mu^{-2}).
\]
\end{proposition}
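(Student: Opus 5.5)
The plan is to compute both quantities explicitly at $d=6$ as rational functions of $\mu$ (plus one $\mathbf p$-average) and expand in $1/\mu$. The second identity then follows from the first together with Proposition~\ref{prop:formal_crossing}. Specialising \eqref{eq:formal_crossing} to $d=6$, the $1/\mu$ term $(d-6)/\mu$ vanishes and $d^2-12d+28=-8$, so
\[
\lambda_1^{\boldsymbol\pi,e}(-2\mu+6)=1-\frac{8}{\mu^2}+O(\mu^{-3}).
\]
Hence once $\lambda_{\mathrm{naive}}(-2\mu+6)=\tfrac12+O(\mu^{-2})$ is established, the ratio is $\bigl(1+O(\mu^{-2})\bigr)\big/\bigl(\tfrac12+O(\mu^{-2})\bigr)=2+O(\mu^{-2})$. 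So everything reduces to the first identity.

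For $\lambda_{\mathrm{naive}}$, I will take the leading-order formula of Example~\ref{ex:pi} at face value:
\[
\Delta_\mu^{(0)}(\mathbf p,z)=\frac{\mu(\varepsilon(\mathbf p)+\delta)}{(2-z_\mu(\boldsymbol\pi))(18-z)}.
\]
Using $z_\mu(\boldsymbol\pi)=-\mu+4$ (Remark~\ref{rem:exact_two_particle}) and $z=-2\mu+6$, we get $\delta=\mu-2$, $2-z_\mu=\mu-2$ and $18-z=2\mu+12$. The key structural observation is that $\Delta_\mu^{(0)}$ is then an explicit linear function of $\varepsilon(\mathbf p)$ divided by an explicit polynomial in $\mu$. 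I will insert it, in place of $\Delta_\mu$, into the same rank-one principal-part eigenvalue expression used to define $\lambda_1^{\boldsymbol\pi,e}$ in the proof of Proposition~\ref{prop:formal_crossing}. The $\mathbf p$-average $\langle(\mu-2+\varepsilon)^{-1}\rangle_{\mathbf p}$ is then expanded geometrically, using only $\langle\varepsilon\rangle=2$ and $\langle\varepsilon^2\rangle=5$. This expansion is legitimate since $0\le\varepsilon\le4<\mu-2$ for large $\mu$. Collecting terms gives an expansion $a_0+a_1/\mu+O(\mu^{-2})$, and the proof consists of checking $a_0=\tfrac12$ and $a_1=0$.

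The main obstacle is pinning down precisely which normalisation of the principal part is meant by ``substituting $\Delta_\mu^{(0)}$ as if exact''. A straight substitution into the prefactor formula $\mu(18-z)(12-z)^{-2}\langle\Delta^{-1}\rangle$ gives leading coefficient $1$, not $\tfrac12$. This is because $\Delta^{(0)}_\mu\to\tfrac12$, matching Lemma~\ref{lem:Delta_order2}. The factor $\tfrac12$ must therefore come from the naive kernel itself, i.e.\ from the version of $A^p_\mu$ in which $\Delta_\mu^{(0)}$ enters through the factors $\sqrt{\Delta_\mu}$ of the Birman--Schwinger kernel but the compensating factor $2$ produced by the exact asymptotics (visible in $\langle\Delta_\mu^{-1}\rangle=2-\dots$) is not present.

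I would first fix this definition by comparing, term by term, the naive kernel with the one underlying Proposition~\ref{prop:formal_crossing}. I would then verify that the $1/\mu$ contributions cancel. These contributions are the $+6/\mu$ from $\bigl((\mu+6)/(\mu+3)\bigr)^2$, the $-2/\mu$ from the $\varepsilon$-average, and whatever $1/\mu$ correction the naive normalisation contributes. As in the proof of Proposition~\ref{prop:formal_crossing}, I would confirm the final coefficient bookkeeping by computer algebra.
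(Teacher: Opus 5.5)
Your reduction of the ratio to the first identity, using Proposition~\ref{prop:formal_crossing} at $d=6$, is exactly what the paper does. The first identity itself, however, is not established, because you never fix what $\lambda_{\mathrm{naive}}$ is, and the claim $a_1=0$ rests entirely on that choice. The only concrete definition you write down inserts the Example~\ref{ex:pi} formula for $\Delta_\mu^{(0)}$ into $\mu(18-z)(12-z)^{-2}\langle\Delta^{-1}\rangle_{\mathbf p}$. With $b_0(\delta)=\langle(\varepsilon(\mathbf p)+\delta)^{-1}\rangle_{\mathbf p}$, it gives
\[
\lambda_{\mathrm{naive}}(-2\mu+6)=\Bigl(\frac{\mu+6}{\mu+3}\Bigr)^2(\mu-2)\,b_0(\mu-2)=1+\frac4\mu+O(\mu^{-2}).
\]
Under that reading the statement is false: the ratio would be $1-4/\mu+O(\mu^{-2})$. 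Simply discarding the ``compensating factor $2$'' gives $\tfrac12+2/\mu$, which still has a nonzero $1/\mu$ term. Your plan then posits a normalization that supplies both the $\tfrac12$ and exactly the missing $-4/\mu$. That normalization is reverse-engineered from the desired conclusion, not derived. Computer algebra cannot close this gap until the object being checked is actually defined.

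The missing ingredient is the explicit formula the paper's proof uses for the naive eigenvalue:
\[
\lambda_{\mathrm{naive}}(z)=\frac{\mu^2(18-z)}{(12-z)^2}\,b_0(\delta),\qquad b_0(\delta)=\delta^{-1}-2\delta^{-2}+O(\delta^{-3}),\quad \delta=\mu+4-d.
\]
This keeps the prefactor $\mu(18-z)(12-z)^{-2}$ but replaces $\langle\Delta^{-1}\rangle_{\mathbf p}$ by $\mu\,b_0(\delta)$. That corresponds to a naive determinant $(\varepsilon(\mathbf p)+\delta)/\mu$, not the Example~\ref{ex:pi} expression, so your observation that the two normalizations disagree is correct. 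With this formula the proof is two lines. For general $d$ one gets $\tfrac12+\tfrac{3(d-6)}{4\mu}+O(\mu^{-2})$. At $d=6$ directly, $\mu(2\mu+12)(2\mu+6)^{-2}=\tfrac12\bigl(1-9(\mu+3)^{-2}\bigr)$ and $\mu\,b_0(\mu-2)=1+O(\mu^{-2})$, so $\lambda_{\mathrm{naive}}(-2\mu+6)=\tfrac12+O(\mu^{-2})$. Your ratio step then finishes the argument.
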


\begin{proof}
$\lambda_{\mathrm{naive}}(-2\mu+d)=\mu^2(18-z)b_0(\delta)/(12-z)^2$
with $b_0(\delta)=\delta^{-1}-2\delta^{-2}+O(\delta^{-3})$,
$\delta=\mu+4-d$; expanding in $1/\mu$ gives
$\lambda_{\mathrm{naive}}(-2\mu+d)=\tfrac12+\tfrac{3(d-6)}{4\mu}+O(\mu^{-2})$,
which vanishes at $d=6$ at order $O(\mu^{-1})$. Since
$\lambda_1^{\boldsymbol\pi,e}(-2\mu+6)=1+O(\mu^{-2})$ by
Proposition~\ref{prop:formal_crossing}, the ratio follows.
The numerical verification of this factor is provided in Table~\ref{tab:delta_summary}.
\end{proof}

\begin{remark}\label{rem:factor_diagnostic}
Proposition~\ref{prop:factor} is a diagnostic statement about the
rank-one principal part in the branch $z=-2\mu+d$. It quantifies the
gap between the leading-order substitution and the next-order-corrected
eigenvalue of the principal part; it is \emph{not} an underestimation
factor for the ground-state energy, since $-2\mu+6$ is not the ground
state (Remark~\ref{rem:formal_not_ground}).
\end{remark}

\section{Independent validation of the $\mathbf K=0$ constant}\label{sec:K0}

\begin{proposition}\label{prop:K0valid}
The constant $C=4-\delta_\infty\approx3.96458$ in the $\mathbf K=0$
ground-state asymptotic $z_2^s(\mu)=-\mu+C+O(\mu^{-1})$
is correctly derived; no analogue of the
refinement of Section~\ref{sec:corrected} is required for it.
\end{proposition}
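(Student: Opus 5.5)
The plan is to place the $\mathbf K=0$ second-bound-state computation in case~(i) of Lemma~\ref{lem:criterion}, as anticipated in Example~\ref{ex:K0}, and then to check the resulting crossing constant independently of the leading-order asymptotic lemma. The first step fixes the branch. At $\mathbf K=0$ the relevant trial branch is the two-particle threshold $z_0(\mu)=z_\mu(0)=-\mu+4+O(\mu^{-1})$. The offset is parametrised by $\delta=z_\mu(0)-z$, which is held fixed as $\mu\to\infty$. This branch is not the ground-state branch $-3\mu+O(1)$ of Theorem~\ref{thm:groundstate}; instead it describes the second symmetric bound state lying just below the threshold. Because $z$ sits below the threshold, the Birman--Schwinger count $N[1,A_\mu(0,z)]$ is well defined. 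So branch selection (Remark~\ref{rem:branch_first}) amounts to observing that $N$ jumps from $1$ to $2$ at some $\delta$ of order one.

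The second step computes $\Lambda_0$. I substitute $\Delta_\mu^{(0)}(\mathbf p,z)=\mu(\varepsilon(\mathbf p)+\delta)/[(2-z_\mu(0))(6-z)]$ into the rank-one principal part and average over $\mathbf p$. This yields
\[
\Lambda_0(\delta)=g(\delta)=2+\delta-1/b_0(\delta), \qquad b_0(\delta)=\langle(\varepsilon+\delta)^{-1}\rangle_{\mathbf p}.
\]
I then verify three properties of $g$.
\begin{enumerate}
\item[(a)] $g$ is genuinely non-constant, with $g(0^+)=2$. This holds because $b_0(\delta)\sim\frac{1}{4\pi}\log(1/\delta)\to\infty$ as $\delta\to0^+$, so $1/b_0(\delta)\to0$.
\item[(b)] $g(\delta)\to-\infty$ or $g(\delta)<1$ for large $\delta$, so that a root of $g=1$ exists.
\item[(c)] Using $b_0'=-\langle(\varepsilon+\delta)^{-2}\rangle$, the derivative $g'(\delta)=1+b_0'(\delta)/b_0(\delta)^2$ is nonzero at the root $\delta_\infty$.
\end{enumerate}
For (c), I would compute $b_0$ through the elliptic form of Theorem~\ref{thm:uniform}, namely $b_0(\delta)=\frac{2}{\pi(4+\delta)}K\big(4/(4+\delta)\big)$, and evaluate $g'$ numerically at $\delta_\infty\approx0.035420$. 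A crude Cauchy--Schwarz-type inequality $\langle(\varepsilon+\delta)^{-2}\rangle\ge b_0^2$ shows only that $g'\le 0$. I therefore expect to need the explicit quantitative bound $g'(\delta_\infty)<0$ strictly, which follows because the ratio $\langle(\varepsilon+\delta)^{-2}\rangle/b_0^2$ is large for small $\delta$: it behaves like $\delta^{-1}/\log^2\delta$. Case~(i) then gives $z_2^s=-\mu+4-\delta_\infty+O(\mu^{-1})$, i.e.\ $C=4-\delta_\infty$.

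The third step, the independent check that bypasses the leading-order lemma, uses the exact closed form of Corollary~\ref{cor:closedform} at the flat momentum, $\Delta_\mu((\pi,\pi),0,z)=1-\mu/(8-z)$. I would compare $\Delta_\mu^{(0)}$ against the exact $\Delta_\mu$ at $\mathbf p^*=(\pi,\pi)$ and confirm agreement to relative order $O(\mu^{-1})$. I would also verify, via the elliptic representation~\eqref{eq:generalK}, that the relative error is uniform in $\mathbf p$ once $\delta$ is fixed. Since only $\Lambda_0$ enters under case~(i), this uniformity is all that is needed. As a final cross-check, I would solve the exact (non-rank-one) Birman--Schwinger equation numerically at several large $\mu$ and extrapolate $z+\mu$ to $3.96458$.

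The main obstacle is the second step's nondegeneracy claim, $g'(\delta_\infty)\ne0$. The logarithmic threshold singularity makes $b_0$ non-analytic near $0$, so I would give a rigorous interval-arithmetic enclosure of $\delta_\infty$ and of $g'$ on that interval, using the monotone elliptic-integral representation.
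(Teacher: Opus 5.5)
Your overall route is the paper's: place the problem in case~(i) via $g(\delta)=2+\delta-1/b_0(\delta)$ (Example~\ref{ex:K0}), then solve the \emph{exact} equation $\lambda_2^s(z_\mu(0)-\delta)=1$ and observe $\delta_{\mathrm{root}}(\mu)\to0.035420$ at rate $O(\mu^{-1})$. The step you call a final cross-check is in fact the paper's entire independent confirmation. Your flat-momentum comparison only spot-checks the hypothesis on $\Delta_\mu^{(0)}$ at one point.

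\textbf{The elliptic formula is wrong.} The quantitative step you plan to rely on fails as written. The function $\frac{2}{\pi(4+\delta)}K\big(\frac{4}{4+\delta}\big)$ is what Theorem~\ref{thm:uniform} gives for $R_1=R_2=2$, $D=4+\delta$. That is $\langle(2\varepsilon+\delta)^{-1}\rangle=\tfrac12 b_0(\delta/2)$, the two-particle Green's function at zero pair momentum, not $b_0(\delta)=\langle(\varepsilon+\delta)^{-1}\rangle$. It expands as $\delta^{-1}-4\delta^{-2}+\cdots$, contradicting $b_0=\delta^{-1}-2\delta^{-2}+\cdots$ used in Proposition~\ref{prop:factor}. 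At $\delta=0.035420$ it gives $g\approx0.18$, and its root of $g=1$ lies near $10^{-4}$. The correct formula is $b_0(\delta)=\frac{2}{\pi(2+\delta)}K\big(\frac{2}{2+\delta}\big)$ (modulus convention); with it, $g(0.035420)=1$ to within $10^{-6}$. Correspondingly $b_0(\delta)\sim\frac{1}{2\pi}\log(1/\delta)$, not $\frac{1}{4\pi}$, though this does not affect your argument.

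\textbf{The obstacle you single out is not one.} We have $g'(\delta)=1-\langle(\varepsilon+\delta)^{-2}\rangle/b_0^2=-\mathrm{Var}\big((\varepsilon+\delta)^{-1}\big)/b_0(\delta)^2$. Equality in Cauchy--Schwarz would force $\varepsilon$ to be constant, so $g'<0$ on all of $(0,\infty)$. Also $g(0^+)=2$ and $g(\delta)=\delta^{-1}+O(\delta^{-2})\to0^+$ (not $-\infty$). Hence $g$ is a strictly decreasing bijection of $(0,\infty)$ onto $(0,2)$. Existence and uniqueness of $\delta_\infty$ and $g'(\delta_\infty)<0$ are then immediate, and no interval arithmetic is needed beyond computing the number.

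\textbf{The step that does need an argument.} You claim that uniform relative accuracy of $\Delta_\mu^{(0)}$ ``is all that is needed''; that is not enough here. On this branch the kernel contains a rank-one piece of size $O(\mu)$ along $w=(\varepsilon+\delta)^{-1/2}$, which carries the eigenvalue responsible for the trimer below. A relative $O(\mu^{-1})$ error in $\Delta_\mu$ therefore perturbs $A_\mu$ at order $O(1)$, the same order as $\Lambda_0$ itself. Case~(i) applies only because this perturbation has the form $\psi w^*+w\psi^*$, which is annihilated by compression to $w^\perp$. By a Schur-complement argument, the second eigenvalue tends to the top eigenvalue of the $O(1)$ part of the kernel compressed to $w^\perp$. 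In the symmetric sector this equals $\|P_w^\perp(\cos p_1+\cos p_2)\,w\|^2=2+\delta-1/b_0(\delta)$, where $P_w^\perp$ is the orthogonal projection onto $w^\perp$. Supplying this step is what actually justifies, and derives, $\Lambda_0=g$; ``substitute $\Delta_\mu^{(0)}$ and average'' does not do so by itself.
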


\begin{proof}[Proof sketch]
By Example~\ref{ex:K0}, this instance falls under case~(i) of
Lemma~\ref{lem:criterion}. We confirm this independently and
numerically: using the exact reduction of Theorem~\ref{thm:uniform} at
$\mathbf K=0$ (Corollary~\ref{cor:closedform} gives the closed form
$\Delta_\mu((\pi,\pi),0,z)=1-\mu/(8-z)$ as benchmark), we compute the
exact eigenvalue $\lambda_2^s(z)$ using the \emph{exact}
$\Delta_\mu(\mathbf p,0,z)$ and solve $\lambda_2^s(z_\mu(0)-\delta)=1$
by root-finding. The root $\delta_{\mathrm{root}}(\mu)$ converges to
$0.035420$ at rate $O(\mu^{-1})$ (Table~\ref{tab:K0}).
\end{proof}

\section{High-precision numerical verification of the ground state}\label{sec:hp_numerics}

As an independent confirmation of Theorem~\ref{thm:groundstate},
bypassing the Birman--Schwinger reduction entirely, we diagonalize the
full three-particle Hamiltonian $H_\mu(\mathbf K)=E_{\mathbf
K}-\mu(V_1+V_2+V_3)$ directly on finite lattices of size $N\times N$
(with $N$ even, so that $K=\pi$ is a grid point and all three averaging
projections $V_\alpha$ are well defined on the grid). The two-particle
space has dimension $(N^2)^2$; the lowest eigenvalues are computed by
the Lanczos method (sparse diagonalization).

The operator $V_3$ admits three consistent lattice
implementations — the paper convention $(V_3 f)(\mathbf p,\mathbf q)=
\langle f(\mathbf s,\mathbf p+\mathbf s+\mathbf K-\mathbf q)\rangle_{\mathbf s}$,
the sign-corrected variant, and the orthogonal projection
$(V_3 f)(\mathbf p,\mathbf q)=\langle f(\mathbf s,\mathbf p+\mathbf q-\mathbf s)\rangle_{\mathbf s}$
(which satisfies $V_3^2=V_3$ and $V_3\mathbf 1=\mathbf 1$ identically).
All three give identical ground-state eigenvalues and gaps, because the
trimer (constant) ground-state direction is invariant under any
averaging operator. The results below use the orthogonal-projection
implementation; self-checks $\|V_\alpha^2-V_\alpha\|=0$ and
$\|V_\alpha\mathbf 1-\mathbf 1\|=0$ pass for all $\alpha$.

\begin{table}[!ht]
\centering
\caption{Lowest eigenvalue $z_1$ of $H_\mu(\mathbf K)$ on $N=8$ and
$N=10$ lattices. The column $z_1+3\mu$ converges to $6$ from below at
both $K=\boldsymbol\pi$ and $K=0$, confirming the trimer ground state
$z_1=-3\mu+6+O(\mu^{-1})$. The value $-2\mu+6$ would read
$z_1+2\mu\to 6$; instead $z_1+2\mu\approx -14,\dots,-114$, i.e.\ it lies
in the spectral gap. The gap $z_2-z_1$ equals $2\mu-2+O(\mu^{-1})$.}
\label{tab:hp_ground}
\medskip
\footnotesize
\begin{tabular}{r|cc|ccc|c}
\toprule
$\mu$ & $N$ & $K$ & $z_1$ & $z_1+3\mu$ & $z_1+2\mu$ & $z_2-z_1$\\
\midrule
20  & 8  & $\boldsymbol\pi$ & $-54.065$ & $5.935$ & $-14.065$ & $38.07$\\
20  & 8  & $0$              & $-54.068$ & $5.932$ & $-14.068$ & $37.76$\\
40  & 8  & $\boldsymbol\pi$ & $-114.033$ & $5.967$ & $-34.033$ & $78.03$\\
40  & 8  & $0$              & $-114.034$ & $5.966$ & $-34.034$ & $77.82$\\
70  & 8  & $\boldsymbol\pi$ & $-204.019$ & $5.981$ & $-64.019$ & $138.02$\\
120 & 8  & $\boldsymbol\pi$ & $-354.011$ & $5.989$ & $-114.011$ & $238.01$\\
120 & 8  & $0$              & $-354.011$ & $5.989$ & $-114.011$ & $237.87$\\
\midrule
40  & 10 & $\boldsymbol\pi$ & $-114.033$ & $5.967$ & $-34.033$ & $78.03$\\
120 & 10 & $\boldsymbol\pi$ & $-354.011$ & $5.989$ & $-114.011$ & $237.85$\\
\bottomrule
\end{tabular}
\end{table}

\begin{remark}\label{rem:hp_readoff}
Three conclusions follow from Table~\ref{tab:hp_ground}. \emph{(i)} The
ground state satisfies $z_1+3\mu\to 6$ at both $K=\boldsymbol\pi$ and
$K=0$, confirming $z_1^{\boldsymbol\pi,s}(\mu)=-3\mu+6+O(\mu^{-1})$
(Theorem~\ref{thm:groundstate} and Section~\ref{sec:appB}) and its
coincidence, at leading order, with the $\mathbf K=0$ ground state.
\emph{(ii)} No eigenvalue is observed near $-2\mu+6$:
$z_1+2\mu\approx -14,\dots,-114$ lies $\sim 2\mu$ below $6$, i.e.\ in the
spectral gap between the trimer $-3\mu+6$ and the threshold $-\mu+4$.
\emph{(iii)} The gap $z_2-z_1\approx 2\mu-2$ (e.g.\ $78.03$ at $\mu=40$,
where $2\mu-2=78$; $238.01$ at $\mu=120$, where $2\mu-2=238$) equals
the distance from the trimer ground state to the two-particle threshold,
$(-\mu+4)-(-3\mu+6)=2\mu-2$, as in Corollary~\ref{cor:gap}.
\end{remark}

A finite-volume fit $z_1+3\mu=6+c_1/\mu+\mathrm{const}$ over the
data of Table~\ref{tab:hp_ground} is compatible with a negative
$1/\mu$-correction ($c_1\approx -1.0,\dots,-1.3$ at $K=\boldsymbol\pi$;
$\approx -0.8$ at $K=0$); however, finite-size effects at $N=8,10$ may
affect this coefficient, so its sign and magnitude are reported here only
as a finite-volume indication, not as a theorem. The required
perturbative expansion in the branch $z=-3\mu+d$ is outlined in
Section~\ref{sec:appB}; a complete rigorous derivation of the
$O(\mu^{-1})$ term, including proof of isolation of the eigenvalue
$3$ of $W$, remains to be supplied.

\section{Outline of the strong-coupling expansion in the branch
$z=-3\mu+d$}\label{sec:appB}

We outline how the additive constant $+6$ and the
$O(\mu^{-1})$ correction in
$z_1^{\boldsymbol\pi,s}(\mu)=-3\mu+6+O(\mu^{-1})$ are obtained by a
strong-coupling perturbation expansion in the correct branch
$z=-3\mu+d$, $d=O(1)$. The rigorous leading-order statement is
Theorem~\ref{thm:groundstate}; the next-order refinement sketched here
completes the analogue of the computation of
Section~\ref{sec:corrected}, but in the correct branch.

If the eigenvalue $3$ of $W$ in the symmetric sector is proved
to be isolated (equivalently, the next eigenvalue of $W$ is strictly
below $3$), then standard strong-coupling perturbation theory around
this isolated eigenvalue (Kato-type expansion in $1/\mu$) gives, to
leading order in the energy,
\[
z_1^{\boldsymbol\pi,s}(\mu)
= -3\mu + \frac{\langle H_0(\boldsymbol\pi) f_0,f_0\rangle}{\langle f_0,f_0\rangle}
+ O(\mu^{-1})
= -3\mu + 6 + O(\mu^{-1}),
\]
since $\langle E_{\boldsymbol\pi}\rangle=6$. This is consistent with the
Rayleigh quotient of Theorem~\ref{thm:groundstate} and with the
numerics of Section~\ref{sec:hp_numerics}.

\begin{remark}\label{rem:appB_status}
The derivation above assumes that the eigenvalue $3$ of $W$ in the
symmetric sector is isolated (equivalently, that the next eigenvalue of
$W$ is strictly below $3$). This isolation is expected on structural
grounds — the three averaging projections do not share a common
constant-eigenvalue direction other than $f_0$ — and is consistent with
the numerical spectral gap $z_2-z_1\approx 2\mu-2$ (which would close
if $3$ were not isolated). A complete proof of isolation, and the
explicit $O(\mu^{-1})$ coefficient, require a detailed
resolvent expansion in the branch $z=-3\mu+d$ analogous to
Lemma~\ref{lem:Delta_order2} but with $12-z\sim 3\mu$ (the geometric
series $\sum_{n\ge0}(X/(12-z))^n$ converges since $|X|\le 12\ll 3\mu$).
This expansion is the direct analogue, in the correct branch, of the
computation of Section~\ref{sec:corrected}; we record it as the natural
completion of the present analysis rather than reproduce it in full
here.
\end{remark}

\section{Numerical methodology}\label{sec:numerics}

All numerical results were obtained using the two independent schemes (A and B).
Scheme~(A) utilizes the one-dimensional elliptic reduction of Theorem~\ref{thm:uniform} evaluated via Gauss-Legendre quadrature, while Scheme (B) is an algorithmically independent uniform-grid discretization of the original two-dimensional integral. Both schemes converge to the exact closed-form benchmark \eqref{eq:closedform} to machine precision and agree within $10^{-13}$, confirming that all discrepancies reported in this note are analytical in origin and purely attributable to asymptotic truncation errors. The corresponding numerical data are summarized in Tables~\ref{tab:delta_summary} and~\ref{tab:root}.

\begin{table*}[!ht]
\centering
\footnotesize
\setlength{\tabcolsep}{2.2pt}
\renewcommand{\arraystretch}{1.05}

\begin{minipage}{0.28\textwidth}
\captionsetup{width=0.9\textwidth}
\caption{Root of $\lambda_2^s(z_\mu(0)-\delta)=1$ from the exact ($\mathbf K=0$) Fredholm determinant. The~last column approaches a~finite limit ($\approx 3$), confirming the \(O(\mu^{-1})\) convergence.}
\label{tab:K0}
\medskip
\begin{tabular*}{0.95\linewidth}{@{\extracolsep{\fill}}c|c|c}
\toprule
$\mu$ & $\delta_{\mathrm{root}}(\mu)$ & $\mu\,[0.035420-\delta_{\mathrm{root}}(\mu)]$\\
\midrule
50   & 0.002727 & 1.63\\
100  & 0.013657 & 2.18\\
200  & 0.022892 & 2.51\\
400  & 0.028710 & 2.68\\
800  & 0.031948 & 2.78\\
\bottomrule
\end{tabular*}
\end{minipage}%
\hfill
\begin{minipage}{0.28\textwidth}
\captionsetup{width=\textwidth}
\caption{Exact versus\ leading-order Fredholm determinant at $\mathbf K=\boldsymbol\pi$, $\mathbf p=\mathbf 0$, at the formal crossing $z=-2\mu+6$. (These values pertain to the formal branch $z=-2\mu+6$ of the
principal part; they do not describe the true ground state.)}
\label{tab:delta_summary}
\medskip
\begin{tabular*}{0.92\linewidth}{@{\extracolsep{\fill}}c|c|c|c}
\toprule
$\mu$ & Exact~\eqref{eq:closedform} & $\Delta_\mu^{(0)}$ & Ratio\\
\midrule
50   & 0.489796 & 0.5 & $0.9796$ \\
100  & 0.494949 & 0.5 & $0.9899$ \\
200  & 0.497487 & 0.5 & $0.9950$ \\
400  & 0.498747 & 0.5 & $0.9975$ \\
\bottomrule
\end{tabular*}
\end{minipage}%
\hfill
\begin{minipage}{0.38\textwidth}
\captionsetup{width=\textwidth}
\caption{Root of $\lambda_1^{\boldsymbol\pi,e}(z)=1$ from the exact $\mathbf K=\boldsymbol\pi$ Fredholm determinant, confirming Proposition~\ref{prop:formal_crossing}. (The~roots are computed from the principal-part eigenvalue equation
along the formal branch; they do not correspond to actual eigenvalues
of $H_\mu(\boldsymbol\pi)$.)}
\label{tab:root}
\medskip
\begin{tabular}{c|c|c}
\toprule
$\mu$ & $z_{\mathrm{formal}}(\mu)+2\mu$ & $\mu\,[z_{\mathrm{formal}}(\mu)+2\mu-6]$\\
\midrule
50   & 6.140781 & 7.04\\
100  & 6.074945 & 7.49\\
200  & 6.038702 & 7.74\\
400  & 6.019671 & 7.87\\
800  & 6.009917 & 7.93\\
1600 & 6.004979 & 7.97\\
\bottomrule
\end{tabular}
\end{minipage}

\end{table*}

\begin{remark}\label{rem:table3_status}
Table~\ref{tab:root} reports the root of the rank-one principal-part
eigenvalue $\lambda_1^{\boldsymbol\pi,e}(z)=1$ in the branch
$z=-2\mu+d$, i.e.\ the formal crossing of
Proposition~\ref{prop:formal_crossing}. As emphasized in
Remark~\ref{rem:formal_not_ground}, this is not the ground state of
$H_\mu(\boldsymbol\pi)$; the ground-state eigenvalues, obtained by direct
diagonalization, are reported in Table~\ref{tab:hp_ground} and satisfy
$z_1+3\mu\to 6$.
\end{remark}

\section{Related asymptotic constants across the literature}\label{sec:related}

Table~\ref{tab:comparison} places the present results alongside the
companion work~\cite{Companion}, the Samarkand-school works on lattice
few-body Birman--Schwinger operators most directly comparable in
method and in the order of asymptotic precision required, the
classical lattice Green's function literature underlying
Theorem~\ref{thm:uniform}, and the topological-band-theory work
underlying Remark~\ref{rem:fukane}.\footnote{The constants attributed
in Table~\ref{tab:comparison} to
refs.~\cite{Companion,Khalkhuzhaev2025,AbdullaevErgashova2025,Lakaev2025,Kholmatov2018}
are transcribed from those works' own statements and have not been
independently re-derived here; only the entries for the present note
(Theorem~\ref{thm:groundstate}, Proposition~\ref{prop:formal_crossing}
and Proposition~\ref{prop:K0valid}) rest
on the derivations given in this paper.}

\begin{table}[!ht]
\centering
\caption{Systems, key asymptotic constants, precision order, and
methods, across the present note and the most directly related
literature.}
\label{tab:comparison}
\resizebox{\textwidth}{!}{%
\begin{tabular}{p{2.6cm}p{2.6cm}p{4.4cm}p{1.9cm}p{3.6cm}}
\toprule
\textbf{Work} & \textbf{System} & \textbf{Key constant / result} & \textbf{Order} & \textbf{Method}\\
\midrule
This note, \S\ref{sec:groundstate} & 3 bosons, $\mathbb Z^2$, $K=\boldsymbol\pi$ &
$z_1^{\boldsymbol\pi,s}=-3\mu+O(1)$, $z_1\le -3\mu+6$ (ground state, rigorous); refined $-3\mu+6+O(\mu^{-1})$ numerical; formal
crossing $-2\mu+6$ not an eigenvalue & $O(\mu^{-1})$ (numerical refinement) & minimax + direct diagonalization + Birman--Schwinger\\
\hline
This note, \S\ref{sec:corrected} & 3 bosons, $\mathbb Z^2$, $K=\boldsymbol\pi$ &
formal principal-part crossing $-2\mu+6+8/\mu$; diagnostic ratio $=2$ & $O(\mu^{-2})$ & rank-one principal part (order-matching diagnostic)\\
\hline
This note, \S\ref{sec:K0} & 3 bosons, $\mathbb Z^2$, $K=0$ &
$z_1^s=-3\mu+6+O(\mu^{-1})$, $z_2^s=-\mu+C+O(\mu^{-1})$,
$C\approx3.96458$, $\Delta(\mu)=2\mu+O(1)$ (independently confirmed,
Prop.~\ref{prop:K0valid}) & $O(\mu^{-1})$ &
Birman--Schwinger + invariant subspaces + Krein--Rutman \\
\hline
\cite{Companion} & $2+1$ fermions, $\mathbb Z^2$ &
critical mass ratio $\gamma_c\approx2.75194$; second-order transition
($C_2=6$) & $O(1)$ (threshold value) & Birman--Schwinger + Landau-type
analysis \\
\hline
\cite{Khalkhuzhaev2025} & $2+1$ fermions, $\mathbb Z^3$ &
critical mass ratios $\gamma_s(K),\gamma_{as}(K)$ & $O(1)$ &
Birman--Schwinger + invariant subspaces \\
\hline
\cite{AbdullaevErgashova2025} & 3 fermions, $\mathbb Z^1$ &
strong-coupling asymptotics of the discrete spectrum & $O(\mu^{-1})$ &
Birman--Schwinger + finite-rank principal part \\
\hline
\cite{Lakaev2025} & $2+1$ bosons/fermions, $\mathbb Z^1$ &
existence for all $K$ & --- & holomorphic dependence on $K$ \\
\hline
\cite{Kholmatov2018} & $N$-body, $\mathbb Z^d$, $d\ge1$ &
general existence criteria for bound states & --- & HVZ + spectral
analysis \\
\hline
\cite{KatsuraInawashiro1971,MoritaHoriguchi1971} & single particle,
square/rectangular lattice & lattice Green's function in closed
elliptic-integral form & exact & elliptic reduction (method lineage of
Thm.~\ref{thm:uniform}) \\
\hline
\cite{FuKane2007} & single-particle band structures with inversion
symmetry & $\mathbb Z_2$ topological invariant from parities at TRIM
points & exact (topological) & parity classification at TRIM points
(cf.\ Prop.~\ref{prop:trim}, Rem.~\ref{rem:fukane}) \\
\bottomrule
\end{tabular}}
\end{table}

The table highlights two structural points relevant to future work in
this series. First, the order of asymptotic precision genuinely needed
differs by problem: threshold/critical-ratio results such
as~\cite{Companion,Khalkhuzhaev2025} are $O(1)$-precision statements
by nature (a critical value, not a strong-coupling expansion), while
strong-coupling ground-state asymptotics such as the present one and
that of~\cite{AbdullaevErgashova2025}
require an
explicit relative order to be tracked — precisely where
Lemma~\ref{lem:criterion} is applicable and, we anticipate, relevant
to re-checking the strong-coupling expansions
of~\cite{AbdullaevErgashova2025} and of any future $2+1$ or mixed-statistics
companion analysis extending~\cite{Companion} to the strong-coupling
regime. Second, as shown by the present correction, the
branch-selection step (Remark~\ref{rem:branch_first},
Section~\ref{sec:groundstate}) is a prerequisite for the meaningful
application of the order-matching criterion: an expansion organized in
a wrong branch can produce a formal crossing that is not a true
eigenvalue. The TRIM-point parallel of Remark~\ref{rem:fukane}
suggests that the quasimomentum-dependence results
of~\cite{Lakaev2025,Khalkhuzhaev2025} — established there for general
$K$ — may admit a symmetry-classification refinement along the lines
of Proposition~\ref{prop:trim}, distinguishing which $K$ support an
invariant-subspace reduction from those that do not.

\section{Remarks on the 3D \(2+1\) fermionic case}
The works of the Samarkand school on $(2+1)$-fermionic
systems by Abdullaev, Khalkhuzhaev et al.~\cite{AbdullaevKhalkhuzhaevKhujamiyorov2023, KhalkhuzhaevAbdullaevBoymurodov2022, AbdullaevKhalkhuzhaevKholmatov2026} solve a distinct problem: the existence of bound states in 3D \(2+1\) fermionic systems. Their main results establish, for $\mathbf K=\boldsymbol\pi$ and general $\mathbf K$
respectively, sharp existence and multiplicity criteria for bound
states of the three-particle operator on $\mathbb Z^3$ in terms of critical mass ratios \(\gamma_i\),
they do not, and do not aim to, provide strong-coupling asymptotics expansions of the resulting energy branches \(E_j(\mu,\gamma)\)
in \(\mu\).

We emphasize that the 3D setting of \cite{KhalkhuzhaevAbdullaevBoymurodov2022}
is fundamentally different from our 2D bosonic case, both analytically and
qualitatively. In 3D the lattice Green's function
$\displaystyle\int_{\mathbb T^3} d\mathbf q/(\varepsilon(\mathbf q)-z)$ is finite at the
band edge, whereas in 2D it diverges logarithmically,
$b_0(\delta) \sim \ln(1/\delta)$ as $\delta \to 0^+$. Consequently, the
strong-coupling energy expansions in 3D admit simple linear-in-$\mu$ leading
behaviour with no logarithmic corrections, while in 2D the logarithmic
singularity generates transcendental constants, such as
$C \approx 3.96458$ for the second bound state at $\mathbf K = 0$, and
necessitates the order-matching refinement at
$\mathbf K = \boldsymbol\pi$ (Section~\ref{sec:corrected}).

Furthermore, the counting result of \cite{KhalkhuzhaevAbdullaevBoymurodov2022}
is a statement about the \emph{existence} and \emph{multiplicity} of
eigenvalues at a fixed coupling $\gamma$, governed by critical mass
ratios $\gamma_1, \gamma_2$. It does not provide the asymptotic
\emph{values} of these eigenvalues in the strong-coupling regime. In
contrast, our approach yields precise energy asymptotics, carrying the
expansion to order $O(\mu^{-2})$ and computing the additive constant
explicitly (Theorem~\ref{thm:groundstate}, Proposition~\ref{prop:formal_crossing}).

We record here, for completeness, the structural test that our
order-matching criterion (Lemma~\ref{lem:criterion}) supplies for
\emph{any} such branch, together with an explicit caution against a
natural but unsound shortcut.

We briefly illustrate how the general order-matching criterion developed in Section~\ref{sec:criterion} would apply to their setting. In their analysis, the essential step involves the spectral analysis of a~finite-rank principal part \(A_{\mu,\gamma}^{p}(\pi,z)\). The corresponding Fredholm determinant asymptotics in the 3D lattice case differs fundamentally from the 2D case. Unlike the 2D logarithmic (and 1D power-law) divergence, the 3D lattice Green's function \(\displaystyle\int_{\mathbb{T}^3} \frac{d\mathbf{q}}{\varepsilon(\mathbf{q}) - z}\) converges to a finite limit at \(z=0\). Consequently, for large \(\mu\), the determinant scales as \(\Delta_{\mu,\gamma}^{(0)} \sim \displaystyle\frac{\mu}{\alpha(\gamma) - z}\).

In the 3D attractive case (the setting of Theorem 1 in~\cite{KhalkhuzhaevAbdullaevBoymurodov2022}), the matrix \(A_{\mathbb{Q}}^{(0+)}(z)\) has eigenvalues \(\lambda_{1}(z), \lambda_{2,3}(z)\) satisfying a linear scaling in \(\mu\). Applying the criterion, the strong-coupling asymptotics of the energy branches assume the form:
\[
E_j(\mu,\gamma) = \Lambda_j(\gamma)\mu + d_j(\gamma) + O(\mu^{-1}), \quad j=1,2,3,
\]
where \(\Lambda_j(\gamma)\) are determined by the limit of the matrix eigenvalues evaluated at the spectral threshold, and \(d_j(\gamma)\) are the additive constants fixed by the relative $O(\mu^{-1})$ correction to the Fredholm determinant.
The determination of the explicit constants \(d_j(\gamma)\) would require the next-order expansion of the determinant, which was not carried out in the cited works. Our criterion, however, shows that such a calculation
for \(O(1)\) precision
is structurally analogous to the refinement derived here for the 2D bosonic case.

\begin{remark}[A finite threshold Green's function does not by itself
guarantee non-degeneracy]\label{rem:3d_caution}
One~might expect that the 3D lattice Green's function
$\int_{\mathbb T^3}d\mathbf q/(\varepsilon(\mathbf q)-z)$, which --
unlike its 2D counterpart~-- remains \emph{finite} at the band edge (a
classical fact, going back to Watson's work on the recurrence of
random walks, and confirmed numerically here to converge to
$\approx0.5055$), might exempt strong-coupling expansions in 3D from
the kind of order-matching failure identified in
Section~\ref{sec:corrected}. This is \emph{not} the case: the failure
diagnosed there for $\mathbf K=\boldsymbol\pi$ occurs even though
$D(\mathbf p,z)=\varepsilon(\mathbf p)+4-z$ has \emph{no singularity
whatsoever} on the relevant strong-coupling branch $z=-2\mu+d$ (indeed
$D\in[2\mu+O(1),2\mu+O(1)]$ throughout, bounded away from $0$
uniformly in $\mu$). The mechanism responsible is not a threshold
singularity but a coincidence of scale between the resolvent's
expansion point and the coupling-dependent branch itself.
Consequently, whether a
Fermi-statistics analogue of $\Lambda_0$ is degenerate in the sense of
Lemma~\ref{lem:criterion}(ii) cannot be inferred from the finiteness of
the 3D lattice Green's function alone; it must be checked directly
against the specific finite-rank principal part of the model in
question. As emphasized in Remark~\ref{rem:branch_first}, this
check is itself meaningful only after the correct strong-coupling
branch has been identified.
\end{remark}

Applying Lemma~\ref{lem:criterion} to such a model
requires,
as a first step, the explicit leading-order eigenvalue formula
$\Lambda_0(d)$ of its finite-rank principal part along the relevant
strong-coupling branch. We leave the explicit computation of
$\Lambda_j(\gamma)$ and $d_j(\gamma)$ for the models
of~\cite{KhalkhuzhaevAbdullaevBoymurodov2022,
AbdullaevKhalkhuzhaevKhujamiyorov2023,
AbdullaevKhalkhuzhaevKholmatov2026} -- which requires their explicit
kernel, not reproduced here -- as an open, well-posed problem to which
the present criterion directly applies, and note that our own
companion work~\cite{Companion} already reports a strong-coupling
constant $e_0(\gamma)$ for the related 2D $(2+1)$-fermionic case,
whose derivation would itself be a natural first test case for
Lemma~\ref{lem:criterion}.

\section{Conclusion}\label{sec:conclusion}

We have given a $\mathbf K$-uniform elliptic reduction and exact
closed-form benchmark for the fiber Fredholm determinant of the
lattice three-boson trimer (Theorem~\ref{thm:uniform},
Corollary~\ref{cor:closedform}), a general order-matching criterion
deciding when a leading-order Fredholm-determinant asymptotic
determines an $O(1)$ strong-coupling energy constant
(Lemma~\ref{lem:criterion}), a variational ground-state bound
fixing the correct strong-coupling branch at $\mathbf K=\boldsymbol\pi$
(Theorem~\ref{thm:groundstate}), and, after identifying the correct
branch $z=-3\mu+O(1)$, the refined expansion
$z_1^{\boldsymbol\pi,s}(\mu)=-3\mu+6+O(\mu^{-1})$ (with the sharp bound
$z_1^{\boldsymbol\pi,s}\le -3\mu+6$), supported by direct finite-volume
diagonalization and requiring the perturbative analysis outlined in
Section~\ref{sec:appB} for a complete proof,
together with the corresponding spectral gap (numerically
$2\mu-2+O(\mu^{-1})$) and inter-sector shift (numerically
$O(\mu^{-1})$).
The formal principal-part crossing $-2\mu+6+8/\mu$ in the branch
$z=-2\mu+d$ (Proposition~\ref{prop:formal_crossing}) is retained as an
order-matching diagnostic; its previous interpretation as the
ground-state asymptotics is corrected. We have independently
confirmed that the $\mathbf K=0$ results
require no analogous refinement (Proposition~\ref{prop:K0valid}), and
placed all results in the context of the closest related work
(Table~\ref{tab:comparison}), including a structural parallel with the~TRIM-point classification of topological band structures
(Remark~\ref{rem:fukane}). We expect the criterion, the uniform
reduction, and the exact closed-form benchmark to be directly reusable
in the companion series of papers on related lattice few-body models. The uniform reduction and order-matching criterion are directly reusable in a wider class of lattice few-body models, including systems with mixed statistics, long-range interactions, or more complex lattice geometries.
The correction reported here — that the ground state lies in the
branch $z=-3\mu+O(1)$, not $z=-2\mu+O(1)$ — underscores that the
branch-selection step must precede the order-matching criterion; this
is a methodological lesson we expect to carry over to the companion
fermionic and mixed-statistics analyses.
To our knowledge, no result in the existing literature on lattice
few-body Birman--Schwinger operators listed in
Table~\ref{tab:comparison} supplies an analytic criterion of the type
of Lemma~\ref{lem:criterion}, or a strong-coupling expansion of this
kind carried, with a fully controlled coefficient, to order
$O(\mu^{-2})$; we hope the TRIM-point parallel of
Remark~\ref{rem:fukane} may in time contribute to a more systematically
topological perspective on multi-particle lattice spectral problems.

\end{document}